\documentclass[11pt]{amsart}

\usepackage[normalem]{ulem}

\usepackage{latexsym} 
\usepackage{amsmath} 
\usepackage{epsfig}
\usepackage{amssymb}
\usepackage{enumerate}
\usepackage{color}
\usepackage{graphicx}
\usepackage{hyperref}
\usepackage[foot]{amsaddr}

\newtheorem{theorem}{Theorem}[section]

\newtheorem{lemma}[theorem]{Lemma}
\newtheorem{sublemma}{}[theorem]
\newtheorem{corollary}[theorem]{Corollary}

\newtheorem{observation}[theorem]{Observation}
\newtheoremstyle{problem}{}{}{}{0pt}{}{}{0pt}{}
\theoremstyle{problem}
\newtheorem*{prb}{}

\newcommand{\cD}{{\mathcal D}}
\newcommand{\cG}{{\mathcal G}}
\newcommand{\cI}{{\mathcal I}}
\newcommand{\cL}{{\mathcal L}}

\newcommand{\NTstO}{{\sc NT-st-Orientation}}
\newcommand{\NAESAT}{{\sc Not-All-Equal 3-SAT}}
\newcommand{\RNAESAT}{{\sc Positive NAE-3-SAT-E4}}
\newcommand{\PhyloNTO}{{\sc Phylogenetic NT-Orientation}}

\title[]{Orientations without transitive arcs for cubic graphs and phylogenetic networks}

\author{Janosch Döcker$^1$ and Simone Linz$^2$} 

\thanks{The authors were supported by the New Zealand Marsden Fund from
Government funding, administered by the Royal Society Te Ap\=arangi New Zealand.}

\address{$^{1,2}$ School of Computer Science, University of Auckland, Auckland, New Zealand}
\email{s.linz@auckland.ac.nz}

\keywords{}

\date{\today}

\begin{document}
 
\begin{abstract}
An  $st$-orientation of an undirected graph $G$ is an acyclic digraph with a single source $s$ and a single sink $t$ that can be obtained from $G$ by assigning a direction to each edge. The classical problem of deciding if an undirected graph $G$ has an $st$-orientation can be solved efficiently. On the other hand, deciding if an $st$-orientation of $G$ exists that does not have any transitive arc is NP-complete, even if each vertex of $G$ has degree at most four.
Here we show that this last decision problem remains NP-complete if $G$ is cubic, which settles an open question by Binucci et al. (2025). We obtain NP-completeness for two variants of the problem: (i) $s$ and $t$ are fixed and given as part of the input and (ii) $s$ and $t$ can be chosen freely.  We then use these results to investigate the computational complexity of a problem that arises in computational evolution. Specifically, we show that the problem of deciding if an unrooted binary phylogenetic network has an orientation as a rooted binary phylogenetic network without any shortcuts (the analog of a transitive arcs in phylogenetics) is NP-complete. Our results connect the two (mostly) distinct research areas of orienting undirected graphs and orienting unrooted phylogenetic networks.
\end{abstract}

\maketitle

\section{Introduction}

An \emph{$st$-orientation} (also called a \emph{bipolar orientation}) of an undirected graph $G$ is an acyclic digraph with a single source $s$ and a single sink $t$ that can be obtained from $G$ by assigning a direction to each edge. Deciding if an undirected graph has an $st$-orientation is a classical problem in graph theory that can be solved efficiently.  Applications of $st$-orientations include planarity testing (see, e.g.,~\cite[Chapter 2.3]{weiskircher01}), graph drawing (e.g.~\cite{biedl98}) and phylogenetics~\cite[Lemma 4.13]{janssen18}. 
For many applications and algorithms however, constrained acyclic orientations are of interest, e.g., acyclic orientations without any transitive arcs. In this context, an arc $(u,v)$ of a digraph is called \emph{transitive} if there exists a directed path from $u$ to $v$ that does not traverse $(u,v)$.

The problem of deciding if an undirected graph has an acyclic orientation without any transitive arc was first posed by Ore~\cite{ore62}. Subsequently, NP-completeness of the problem was established by Ne\v{s}et\v{r}il and R\"odel~\cite{nestril95} with an alternative and simpler proof given by Brightwell~\cite{brightwell93}. Complementing and simplifying results by R\"odel and Thoma~\cite{rodel95}, Binucci, Didimo, and Patrignani~\cite{binucci22} recently showed that the decision problem \NTstO, which asks if an undirected graph $G$ with vertex set $V$ and two vertices $s,t\in V$ has a non-transitive $st$-orientation, is NP-complete. In contrast to the earlier mentioned hardness results, note that the input of an instance of \NTstO\ consists not only of $G$, but also of two vertices $s$ and $t$ of $G$ that are the designated source and sink, respectively.  
In a separate paper, Binucci et al.~\cite{binucci23} added to the results presented in~\cite{binucci22} and showed that \NTstO\ is fixed-parameter tractable with respect to the treewidth of the input graph. Moreover they established NP-completeness of two special cases of the same problem: (i) $G$ has degree at most four and is a subdivision of a triconnected graph and (ii) $G$ has diameter at most six. In the final section of their paper, Binucci et al.\ provide several open problems one of which asks about the computational complexity of \NTstO\ if $G$ has degree at most three. By building on the construction in~\cite{binucci22} and using a novel gadget, we show in the first part of this paper  that \NTstO\ remains NP-complete on cubic graphs, i.e., for graphs with degree \textit{exactly} three.

In the second part of the paper, we then turn to an orientation problem that arises in the study of phylogenetic (evolutionary) networks. An \textit{unrooted phylogenetic network} $U$ is an undirected graph whose leaves are labeled with biological entities such as species, viruses, or individuals of a population. Due to their undirected nature, unrooted phylogenetic networks are limited in their ability to model ancestor-descendant relationships.
More precisely, an edge $\{u,v\}$ between two non-leaf vertices of $U$ does not provide an answer as to whether $u$ is an ancestor or descendant of $v$. On the other hand, a \textit{rooted phylogenetic network} $N$, which  is a leaf-labeled acyclic digraph that has a single source with out-degree at least two (the \textit{root}), each leaf is a sink, and all other vertices have in-degree one and out-degree at least two, or in-degree at least two and out-degree one, provides this information. If there is a directed path from $u$ to $v$ in $N$, then $u$ is an ancestor of $v$ and, biologically speaking, the older species represented by $u$ gave rise to the newer species represented by $v$ through vertical and horizontal descent such as speciation and hybridization.  
Initiated by Huber et al.~\cite{huber22}, studying the relationships between unrooted and rooted phylogenetic networks has recently become an active area of research in phylogenetics. A fundamental problem in this area is the following: Given an unrooted phylogenetic network $U$, does there exist a rooted phylogenetic network $N$ such that $U$ can be obtained from $N$ by suppressing its root and omitting the direction of each arc. 
Like deciding if an undirected graph has an $st$-orientation, this problem can be solved efficiently if there are no constraints on $N$~\cite[Lemma 4.13]{janssen18}.

Of particular interest in phylogenetics are unrooted (resp. rooted) {\it binary} phylogenetic networks whose internal vertices have degree three (resp. in-degree one and out-degree two, or in-degree two and out-degree one) because vertices of higher degree indicate insufficiencies in the dataset (e.g., DNA sequences) that was used for the phylogenetic network reconstruction. In addition to considering binary phylogenetic networks and due to the infinite number of phylogenetic networks that exist, even on just a single leaf, it is common to restrict the search space in the reconstruction of a phylogenetic network that best fits a dataset or is optimal with respect to some objective function to smaller classes of networks. For a recent summary of the most prominent classes of phylogenetic networks and their structural constraints, we refer the interested reader to Kong et al.~\cite{kong22}.

Returning back to the orientation of  phylogenetic networks with a focus on binary networks, Huber al al.~\cite{huber22}  investigated if an unrooted binary phylogenetic network can be oriented as a rooted binary phylogenetic network that belongs to certain classes of networks and, among other results, established a fixed-parameter tractable algorithm for a class of rooted binary phylogenetic networks that contains the popular class of binary tree-child networks. 
Since the computational complexity of the underlying problem was left open, their paper initiated a line of work on the hardness of deciding if an unrooted phylogenetic network can be oriented as a rooted phylogenetic network of some given class, which has since become an active area of research (e.g.,~\cite{bulteau23, dempsey24, doecker25, garvardt23, iersel,maeda23,urata24}). In the second part of this paper, we add to the literature of orienting phylogenetic networks and use our complexity result on non-transitive $st$-orientations for cubic graphs to show that it is NP-complete to decide if an unrooted binary phylogenetic network can be oriented as a rooted binary phylogenetic network without any transitive arc. Transitive arcs, which are called {\it shortcuts} in phylogenetics, play a role in the class of rooted {\it normal} networks~\cite{francis25,willson10}. This class contains every rooted phylogenetic network that has no transitive arc and for which each non-leaf vertex has a child that has in-degree one.

The remainder of the paper is structured as follows. In Section~\ref{sec:preliminaries}, we provide notation and definitions, and formally define \NTstO. Then, in Section~\ref{sec:new-fork-gadget}, we introduce a new gadget that is crucial to establish NP-hardness of \NTstO\ for cubic graphs via a Karp reduction from a variant of {\sc Not-All-Equal $3$-SAT} in Section~\ref{sec:nt-st-orientation-cubic}. Using the result of this section, we then show in Section~\ref{sec:phylogenetic-networks} that it is NP-complete to decide if an  unrooted binary phylogenetic network can be oriented as a rooted binary phylogenetic network without any transitive arc. Finally, Section~\ref{sec:conclusion} concludes the paper with a summary of the main results and some questions for future research.

\section{Preliminaries}\label{sec:preliminaries}

Let $G = (V, E)$ be a \textit{simple undirected graph} on a set $V$ of vertices and a set $E \subseteq \{\{u,v\} \mid u,v \in V \text{ with } u \neq v\}$ of edges. 
Let $V'$ be a subset of $V$. The simple undirected graph $G' = (V', E')$ with $E' = \{\{u,v\} \in E \mid u,v \in V'\}$ is called an \textit{induced subgraph} of $G$ and is denoted by $G[V']$. We say that two simple undirected graphs $G_1=(V_1, E_1)$ and $G_2=(V_2,E_2)$ are \textit{isomorphic} if and only if there exists a bijective mapping $f\colon V_1 \rightarrow V_2$ such that $\{u,v\} \in E_1$ if and only if $\{f(u),f(v)\} \in E_2$.  Next, let $D = (V, A)$ be a \textit{simple directed graph} (also called {\it simple digraph}) on a set $V$ of vertices and a set $A \subseteq \{(u,v) \mid u,v \in V \text{ with } u \neq v\}$ of arcs.
As all undirected (resp. directed) graphs in this paper are simple (i.e, they neither contain a loop nor a pair of edges (resp. arcs) in parallel), we will  omit the adjective {\it simple} throughout the paper. 

Now let $G=(V,E)$ be an undirected graph, and let $D=(V',A)$ be a directed graph. 
For an arc $(u,v)$ in $A$, we call  $u$  a \textit{parent} of $v$ and $v$  a \textit{child} of $u$. Furthermore, two vertices $u$ and $v$ are \textit{neighbors} in $D$ (resp.\ $G$) if $(u,v)\in A$ or $(v,u)\in A$ (resp.\ $\{u,v\}\in E$). We also say that two neighbors $u$ and $v$ are \textit{adjacent} to each other and that an edge $e\in E$ (resp.\ an arc $a\in A$) is {\it incident} with a vertex $v$ if  $v$ is one of the two endpoints of $e$ (resp.\ $a$). The \textit{degree} of a vertex $v$ in $G$ is the number of edges in $E$ that are incident with $v$. We call $G$ \textit{cubic} if each vertex in $G$ has degree three. Turning to $D$, let $w\in V'$. Then the \textit{in-degree} of $w$ in $D$ is $|\{(u,v) \in A \mid v = w\}|$ and the  \textit{out-degree}  of $w$ is $|\{(u,v) \in A \mid u = w\}|$. We also define the {\it degree} of $w$ in $D$ as the sum of its in-degree and out-degree. Moreover,  $w$  is called a \textit{source} of $D$ if its in-degree is zero and a \textit{sink} of $D$ if its out-degree is zero. Lastly, if a vertex of $G$ or $D$ has degree $k$, we refer to it as a \textit{degree-$k$ vertex}.

Next let $e_1 = \{u,v\}$ and $e_2 = \{v,w\}$ be two edges of an undirected graph $G = (V,E)$ such that $v$ is a degree-2 vertex. Further, let $G' = (V',E')$ be the undirected graph with $V' = V\setminus\{v\}$ and $E' = (E\setminus\{e_1, e_2\}) \cup \{\{u,w\}\}$, that is, $G$ is obtained from $G$ by deleting the two edges $e_1$ and $e_2$ and, if not already present, adding the edge $\{u,v\}$. We say that $G'$ is obtained from $G$ by \textit{suppressing} $v$. Reversely, $G$ is obtained from $G'$ by \textit{subdividing} the edge $\{u,w\}$ with a new vertex $v \not\in V'$. 
Furthermore, if there is a sequence $(G'=G_1, G_2, \ldots G_k=G)$ such that, for each $i \in \{1,2,\ldots,k-1\}$, $G_{i+1}$ can be obtained from $G_i$ by subdividing an edge, we say that $G$ is a \textit{subdivision} of $G'$. We also consider $G$ to be a subdivision of itself.  
      
Let $G=(V,E)$ be an undirected graph, and let $\pi = (v_1, v_2, \ldots, v_k)$ with $k\geq 1$ be a sequence of pairwise distinct vertices such that $e_i = \{v_i,v_{i+1}\}$ is an edge in $E$ for each $1 \leq i \leq k-1$. We say that $\pi$ is a \textit{path} of $G$.If $\pi$ is a path and there exists an edge $\{v_1,v_k\}$ in $G$, then we call $(v_1, v_2, \ldots, v_k,v_1)$ a \textit{cycle} of $G$. Since $G$ is simple, note that each cycle in $G$ has at least three distinct vertices.
If $G$ does not contain a cycle, then it is called \textit{acyclic}. Furthermore, we say that $G$ is \textit{connected} if, for each pair $u$ and $v$ of vertices in 
$V$, there exists a path in $G$ that connects $u$ and $v$. Now let $G=(V,E)$ be a connected undirected graph, and let $e\in E$. Then $e$ is referred to as  a \textit{cut edge} of  $G$ if the graph obtained from $G$ by deleting $e$ is no longer connected. Furthermore, a \textit{biconnected component} of $G$ is a maximal induced subgraph $G[V']$ such that $V'\subseteq V$ and, for any two distinct vertices $u,v \in V'$, there is a cycle in $G[V']$ that contains $u$ and $v$. Observe that any edge of $G$ that contains a vertex in $V'$ and a vertex in $V\setminus V'$ is a cut edge since $G[V']$ is maximal.      

Turning to directed graphs, let $D=(V,A)$ be a directed graph, and let $\pi = (v_1, v_2, \ldots, v_k)$ be a sequence of pairwise distinct vertices with $k\geq 1$ such that $a_i = (v_i,v_{i+1})$ is an arc in $A$ for each $1 \leq i \leq k-1$. We call $\pi$ a {\it directed path} in $D$.
Furthermore, we say that $\pi$ \textit{traverses} an arc $a \in A$ if $a = a_i$ for some $i \in \{1,2,\ldots,k-1\}$. If $\pi$ is a directed path and there exists an arc $(v_k,v_1)$ in $D$, then we call $(v_1, v_2, \ldots, v_k,v_1)$ a \textit{directed cycle} of $D$. If $D$ does not contain a directed cycle, then it is called \textit{acyclic}. 
Now, let $C=(v_1,v_2,\ldots,v_k)$ be a sequence of vertices vertices in $D$ such that the vertices in $\{v_1,v_2,\ldots,v_{k-1}\}$ are pairwise distinct and $v_1=v_k$. Then $C$ is an {\it underlying cycle} of $D$ if either $(v_i,v_{i+1})$ or $(v_{i+1}, v_i)$ is an arc in $D$  for each $1\leq i\leq k-1$.
Note that a directed cycle is always an underlying cycle, but that the reverse is not true in general. Intuitively, each cycle in the undirected graph $G'$ obtained from $D$ by omitting the direction of each arc is an underlying cycle of $D$. Observe however that $G'$ is not simple and may have a cycle that contains only two edges.
Lastly, an arc $(u,v)$ in $D$ is called a \textit{transitive arc} if  there exists a directed path from $u$ to $v$ in $D$ that does not traverse~$(u,v)$.

For the edge set $E$ of an undirected graph $G=(V,E)$, we define $A(E)$ to be be the set of arcs
\[
 \{(u,v),(v,u) \mid \{u,v\} \in E\}
\]
which contains both directions for each edge in $E$. Let $\tau \colon E \rightarrow A(E)$ be an injective function that assigns a direction to each edge of $G$, that is, $\tau(\{u,v\}) \in \{(u,v), (v,u)\}$. We refer to the operation of replacing an edge $e = \{u,v\}$ by an arc in $\{(u,v), (v,u)\}$ also as \textit{orienting} or {\it directing} $e$. We write $\cD_\tau(G)$ or, simply, $\cD(G)$ if $\tau$ does not play any particular role to denote the directed graph obtained from $G$ by replacing $e$ by the arc $\tau(e)$ for each $e\in E$ and say that $\cD_\tau(G)$ is an \textit{orientation} of $G$.  The focus of this paper are constrained orientations. In particular, we say that an orientation $\cD(G)$ is \textit{acyclic} (resp.\ \textit{non-transitive}) if $\cD(G)$ does not contain a directed cycle (resp.\ $\cD(G)$ does not contain a transitive arc). 
Now let $s,t\in V$. An \textit{$st$-orientation} of $G$ is an acyclic orientation $\cD(G)$ such that $s$ is the unique source of $\cD(G)$ and $t$ is the unique sink of $\cD(G)$. 
Intuitively, every vertex in $V\setminus\{s,t\}$ has an arc directed towards it and another arc directed away from it in any $st$-orientation of $G$.

The main focus of this paper is the following decision problem.

\noindent\fbox{%
    \parbox{.97\textwidth}{%
\begin{prb}
\noindent\NTstO\\
{\bf Instance.} An undirected graph $G=(V,E)$ and two vertices $s,t \in V$.\\
{\bf Question.} Does $G$ have a non-transitive $st$-orientation? \end{prb}
}
}

\begin{figure}[t]
    \centering
    \includegraphics[width=.5\textwidth]{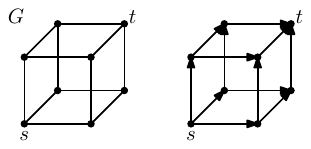}
    \caption{A cubic graph $G$ with designated vertices $s$ and $t$ (left) and a non-transitive $st$-orientation of $G$ (right).  }
    \label{fig:cube}
\end{figure}

\noindent The example illustrated in Figure~\ref{fig:cube} is a yes-instance of \NTstO. 

We next state three observations on non-transitive $st$-orientations whose proofs are straightforward and omitted. In particular, the third observation is an immediate consequence of the second.

\begin{observation}
\label{obs:symmetry}
Let $G=(V,E)$ be an undirected graph, let $s,t\in V$, and let $\cD(G)$ be an orientation of $G$. Then $\cD(G)$  is a non-transitive $st$-orientation of $G$ if and only if  the directed graph obtained from $\cD(G)$ by reversing the direction of each arc is a  non-transitive $ts$-orientation of $G$.
\end{observation}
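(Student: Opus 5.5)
Let $\overline{\cD}$ denote the digraph obtained from $\cD(G)$ by reversing every arc. The plan is to prove one direction and obtain the converse by symmetry. Reversal is an involution: reversing $\overline{\cD}$ gives back $\cD(G)$, and the roles of $s$ and $t$ are swapped when we reverse. So it suffices to show that if $\cD(G)$ is a non-transitive $st$-orientation of $G$, then $\overline{\cD}$ is a non-transitive $ts$-orientation of $G$. Applying this statement to $\overline{\cD}$ with the pair $(t,s)$ then yields the other direction.

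First, $\overline{\cD}$ is an orientation of $G$, because each edge $\{u,v\}$ is still oriented exactly once, just in the opposite direction. The main tool is a single correspondence: a sequence $(v_1,\ldots,v_k)$ is a directed path in $\cD(G)$ if and only if $(v_k,\ldots,v_1)$ is a directed path in $\overline{\cD}$. Moreover, the first path traverses $(u,v)$ exactly when the second traverses $(v,u)$. The same correspondence holds for directed cycles.

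Each required property then follows from this correspondence.
\begin{enumerate}[(i)]
\item Acyclicity is preserved, since a directed cycle in $\overline{\cD}$ would reverse to a directed cycle in $\cD(G)$.
\item In-degrees and out-degrees are swapped, so sources become sinks and sinks become sources. Hence $t$ is the unique source of $\overline{\cD}$ and $s$ is its unique sink.
\item Suppose $(v,u)$ is a transitive arc of $\overline{\cD}$, witnessed by a directed path from $v$ to $u$ that avoids $(v,u)$. Reversing this path gives a directed path from $u$ to $v$ in $\cD(G)$ that avoids $(u,v)$. Then $(u,v)$ is transitive in $\cD(G)$, a contradiction.
\end{enumerate}

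None of these steps is a real obstacle. The only point that needs care is the bookkeeping in (iii): we must check that the reversed witness path avoids the reversed arc, which is exactly the traversal part of the correspondence above.
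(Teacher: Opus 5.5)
Your proof is correct and is the straightforward argument the paper has in mind; the paper states this observation without proof. Your path-reversal correspondence covers everything needed: it swaps in- and out-degrees, preserves acyclicity, and carries transitivity witnesses in one direction to transitivity witnesses in the other, and the involution step correctly gives the converse.
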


\begin{observation}
\label{obs:two_arcs} 
Let $G=(V,E)$ be an undirected graph, let $s,t\in V$, and let $\cD(G)$ be a non-transitive $st$-orientation of $G$. Let $C$ be the set of arcs of an underlying cycle of $\cD(G)$. Then the direction of at least two arcs in $C$ has to be reversed such that the resulting arcs form a directed cycle.
\end{observation}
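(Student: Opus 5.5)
The plan is a short argument by contradiction that uses only the definitions of acyclicity and of a transitive arc. Write the underlying cycle as $(v_1,v_2,\ldots,v_k)$ with $v_1=v_k$, the vertices $v_1,\ldots,v_{k-1}$ pairwise distinct, and $C$ its set of arcs. For each $i$, the pair $\{v_i,v_{i+1}\}$ carries exactly one arc, either $(v_i,v_{i+1})$ or $(v_{i+1},v_i)$, because $\cD(G)$ is an orientation of the simple graph $G$. I will assume the cycle has at least three distinct vertices, as for cycles of $G$. In the degenerate case $k=3$ the same arc is met twice, so no choice of reversals can turn it into a directed cycle, and the claim holds trivially.

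First I would show that at least one arc must be reversed. If no arc needs reversing, then $(v_1,\ldots,v_k)$ is already a directed cycle of $\cD(G)$. This contradicts the acyclicity that is part of the definition of an $st$-orientation.

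Next I would rule out that exactly one reversal suffices. Suppose reversing only the arc $(u,v)\in C$ yields a directed cycle. Relabel so that the cycle reads $(v,u,w_1,\ldots,w_m,v)$ after the reversal, i.e.\ the reversed arc is $(v,u)$. All other arcs of $C$ are unchanged and are oriented consistently around the cycle, so $(u,w_1,\ldots,w_m,v)$ is a directed path in $\cD(G)$ from $u$ to $v$. Its vertices are pairwise distinct since $v_1,\ldots,v_{k-1}$ are, and it does not traverse $(u,v)$, because each arc of $C$ occurs once in the cycle. Hence $(u,v)$ is a transitive arc, contradicting that $\cD(G)$ is non-transitive. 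So at least two arcs must be reversed.

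There is no real obstacle. The only care needed is the bookkeeping that the remaining arcs form a genuine directed path: the vertices must be distinct, and the path must avoid the arc $(u,v)$ itself. Both follow directly from the definition of an underlying cycle.
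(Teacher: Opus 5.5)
Your proof is correct and is the straightforward argument the paper has in mind; the paper omits the proof. Reversing no arc would contradict acyclicity, and if reversing a single arc $(u,v)$ sufficed, the remaining arcs of the cycle would form a directed path from $u$ to $v$ with at least two arcs (the cycle has at least three distinct vertices), which would make $(u,v)$ transitive.
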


\begin{observation}\label{obs:triangle-free}
Let $G=(V,E)$ be an undirected graph with $s,t\in V$. If $G$ has a non-transitive $st$-orientation, then $G$ does not contain an induced subgraph isomorphic to a cycle with exactly three vertices.
\end{observation}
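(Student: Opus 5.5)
The plan is to argue by contraposition and apply Observation~\ref{obs:two_arcs} to a triangle. Suppose $G$ has a non-transitive $st$-orientation $\cD(G)$, and suppose, for a contradiction, that $G$ contains three vertices $u,v,w$ with $\{u,v\},\{v,w\},\{u,w\}\in E$. These edges induce a subgraph isomorphic to a cycle on three vertices. In $\cD(G)$ the three corresponding arcs form an underlying cycle $C$ of length three.

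The key step is to count how many arcs of $C$ must be reversed to make it a directed cycle. A cycle of length three has exactly two cyclic directions. For each direction, an arc of $C$ either already agrees with it or must be reversed. Write $r$ for the number of arcs that must be reversed to obtain one of the two directed cycles. Obtaining the other one then requires reversing exactly $3-r$ arcs. Hence one of the two directed cycles needs at most one reversal, since $\min(r,3-r)\le 1$. This contradicts Observation~\ref{obs:two_arcs}, which requires at least two reversals for every underlying cycle.

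For completeness, I would also check this directly rather than only citing Observation~\ref{obs:two_arcs}. Zero reversals would mean $C$ is already a directed cycle, contradicting acyclicity. One reversal would mean that, after relabeling, the arcs are $(u,v)$, $(v,w)$ and $(u,w)$. Then $(u,v,w)$ is a directed path from $u$ to $w$ that does not traverse $(u,w)$, so $(u,w)$ is transitive. This is a contradiction.

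There is no real obstacle here. The only care needed is the parity/complement count $\min(r,3-r)\le 1$, which is exactly what makes triangles special compared with longer cycles.
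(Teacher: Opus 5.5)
Your proof is correct and matches the paper's intended argument. The paper omits the proof, remarking only that Observation~\ref{obs:triangle-free} is an immediate consequence of Observation~\ref{obs:two_arcs}, and that is exactly the derivation you give via the count $\min(r,3-r)\le 1$; your direct check (acyclicity excludes zero reversals, and one reversal yields the transitive arc $(u,w)$) is a sound self-contained verification.
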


We end this section with notation for Boolean satisfiability problems that is needed in Section~\ref{sec:nt-st-orientation-cubic}  to show that {\NTstO} is NP-complete for cubic graphs. Let $X = \{x_1, x_2,\ldots,x_n\}$ be a set of Boolean \textit{variables}, that is, each $x_i \in X$ can take on the truth value \textit{true} (denoted by $T$) or the truth value \textit{false} (denoted by $F$). A \textit{literal} is a variable $x_i$ or its \textit{negation} $\overline{x_i}$, and the set of literals corresponding to $X$ is $\cL(X) = \{x_i, \overline{x_i} \mid x_i \in X\}$. A \textit{clause} $c_j$ is a subset of $\cL(X)$, and a \textit{formula} is a collection of clauses $C = \bigcup_{j=1}^m \{c_j\}$. A mapping $\beta\colon X \rightarrow \{T,F\}$ is called a \textit{truth assignment} for $X$. A truth assignment $\beta$ for $X$ is extended to a mapping $\beta_\cL\colon \cL(X) \rightarrow \{T,F\}$ as follows: $\beta_\cL(x_i) = \beta(x_i)$ and $\beta_\cL(\overline{x_i})=T$ if and only if $\beta(x_i) = F$.  We say that $\beta$ \textit{satisfies} a literal $l \in \cL(X)$ if  $\beta_\cL(l) = T$. Further, $\beta$ \textit{not-all-equal satisfies} a clause $c_j$, or short \textit{nae-satisfies} $c_j$,  if $\beta_\cL$ sets at least one literal in $c_j$ to $T$ and at least one literal in $c_j$ to~$F$. Finally, $\beta$ nae-satisfies a formula $C$ if $\beta$ nae-satisfies each clause $c_j \in C$, in which case we say that $C$ is \textit{nae-satisfiable}. If $|c_j| = 3$ for each clause in $C$, then the problem of deciding if $C$ is nae-satisfiable is referred to as \NAESAT. This problem
has been shown to be NP-complete by Schaefer~\cite{schaefer78}. 

\section{Fork gadget}\label{sec:new-fork-gadget}

In this section, we introduce the fork gadget that will play an important role in establishing NP-hardness of  {\NTstO} for cubic graphs. A variant of this gadget was introduced by Binucci et al.~\cite{binucci22} to prove NP-hardness of \NTstO\ for graphs with degree at most four. Intuitively, the fork gadget is an induced subgraph of an undirected graph $G$ that restricts non-transitive $st$-orientations of $G$ by linking the directions of three specific edges. While the fork gadget by Binucci et al. involves degree-1, degree-2, and degree-4 vertices, and cannot be used as an induced subgraph of a cubic graph, the fork gadget presented here only has degree-1 and degree-3 vertices.
\begin{figure}[t]
    \centering
    \includegraphics[width=.9\textwidth]{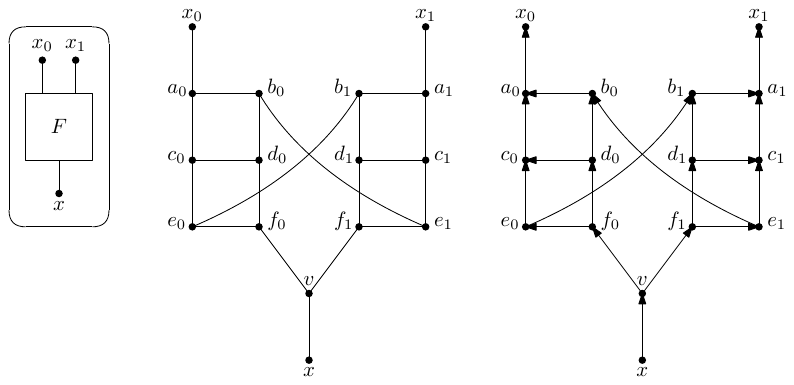}
    \caption{The fork gadget (middle), its schematic representation (left), and an acyclic and non-transitive orientation of the fork gadget (right). }
    \label{fig:fork-gadget-cubic}
\end{figure}

More formally, we refer to a graph $F$ that is isomorphic to the undirected graph depicted in Figure~\ref{fig:fork-gadget-cubic} as the {\it fork gadget}. The vertex set of $F$ is
\[
V_F = \{x,v\} \cup \{x_i,a_i,b_i,\ldots, f_i \mid i \in \{0,1\}\}.
\]
Let $G=(V,E)$ be an undirected graph, and let $s,t \in V$ such that $F$ is an induced subgraph of $G$ and $$\{s,t\} \cap \{a_i,b_i,\ldots, f_i \mid i \in \{0,1\}\} = \emptyset.$$ Further, let $\cD(G)$ be a non-transitive $st$-orientation of $G$. We will show in what follows that the fork gadget links the direction of the three edges $\{x,v\}$, $\{a_0,x_0\}$, and $\{a_1,x_1\}$. In particular, the construction of the fork gadget enforces that,
\begin{enumerate}[(i)]
\item if $(x,v)$ is an arc in $\cD(G)$, then $(a_0, x_0)$ and $(a_1,x_1)$ are arcs in $\cD(G)$, and 
\item if $(v,x)$ is an arc in $\cD(G)$, then $(x_0, a_0)$ and $(x_1,a_1)$ are arcs in $\cD(G)$.
\end{enumerate}
We remark that (i) and (ii) are analogous to the properties of the fork gadget by Binucci et al.~\cite{binucci22}.

To show that the fork gadget has the above two properties, we next establish two lemmas. The first lemma establishes properties of non-transitive $st$-orientations of an undirected graph that has an induced subgraph isomorphic to the \textit{ladder graph} $L_3$ with three rungs (see  Figure~\ref{fig:ladder-graph}). Observing that the  fork gadget has two induced vertex-disjoint subgraphs that are isomorphic to the $L_3$, we then use the first lemma to establishes properties of non-transitive $st$-orientations of an undirected graph that has an induced subgraph isomorphic to the fork gadget in the second lemma. The notation that is used in these two lemmas follows that of Figures~\ref{fig:fork-gadget-cubic} and~\ref{fig:ladder-graph}.
\begin{figure}[t]
    \centering
    \includegraphics[width=\textwidth]{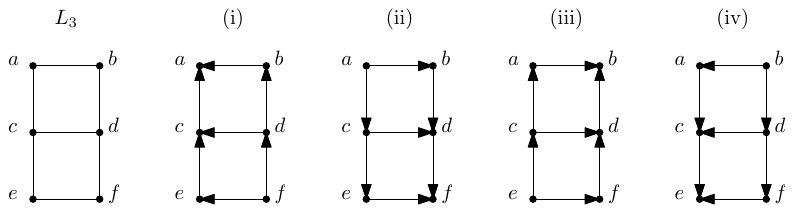}
    \caption{The ladder graph $L_3$ and four acyclic} and non-transitive orientations (i)--(iv) of $L_3$.
    \label{fig:ladder-graph}
\end{figure}

\begin{lemma}\label{lem:ladder}
Let $G=(V,E)$ be a connected undirected graph whose vertices have degree at most three, let $s,t\in V$, and let $\cD(G)$ be a non-transitive $st$-orientation of $G$. Further, let $V_L=\{a,b,\ldots, f\}$ be a subset of $V \setminus\{s,t\}$ such that the induced subgraph $G[V_L]$ is isomorphic to the ladder graph $L_3$. Then, $\cD(G)$ satisfies one of the following properties.
\begin{enumerate}[\rm (i)]
\item $(e,c), (c, a), (f, d), (d,b), (f, e), (d, c)$ and $(b, a)$ are arcs in $\cD(G)$.
\item $(c,e), (a,c), (d,f), (b,d), (e,f), (c,d)$ and $(a,b)$ are arcs in $\cD(G)$.
\item $(e,c), (c, a), (f, d), (d,b), (a, b), (c, d)$ and $(e, f)$ are arcs in $\cD(G)$.
\item $(c,e), (a,c), (d,f), (b,d), (b,a), (d, c)$ and $(f,e)$ are arcs in $\cD(G)$.
\end{enumerate}
\end{lemma}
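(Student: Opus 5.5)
The plan is a finite case analysis on the seven ladder edges, pruned by three local constraints. First I would read the structure of $G[V_L]$ off the arcs listed in the statement. The ladder consists of the two rails $a$--$c$--$e$ and $b$--$d$--$f$ and the three rungs $\{a,b\}$, $\{c,d\}$, $\{e,f\}$. Since every vertex of $G$ has degree at most three, the middle vertices $c$ and $d$ have no neighbours outside $V_L$, and each corner $a,b,e,f$ has at most one edge leaving $V_L$.

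The three constraints on $\cD(G)$ are as follows.
\begin{enumerate}[(a)]
\item No vertex of $V_L$ is $s$ or $t$, so every vertex of $V_L$ has an in-arc and an out-arc. For $c$ and $d$ both of these lie inside the ladder.
\item $\cD(G)$ is acyclic and has no transitive arc. By Observation~\ref{obs:two_arcs}, each of the three underlying cycles $(a,c,d,b,a)$, $(c,e,f,d,c)$ and $(a,c,e,f,d,b,a)$ must have at least two arcs pointing each way around it. For a $4$-cycle this means exactly two each way.
\item No arc may be bypassed by a directed path inside the ladder. For example, $b\to d\to c\to a$ together with $(b,a)$ is forbidden.
\end{enumerate}

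To reduce the work I would use symmetry. Reversing all arcs swaps (i) with (ii) and (iii) with (iv), and the hypotheses are invariant under reversal by Observation~\ref{obs:symmetry}. Swapping the rails ($a\leftrightarrow b$, $c\leftrightarrow d$, $e\leftrightarrow f$) swaps (i) with (iv) and (ii) with (iii). I would therefore fix the middle rung as $(d,c)$, which is the case where (i) or (iv) should come out.

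With $(d,c)$ fixed, constraint (a) forces $c$ to have an out-arc to $a$ or to $e$, and $d$ to have an in-arc from $b$ or from $f$. I would branch on the directions of $\{a,c\}$ and $\{c,e\}$, and in each branch use the balanced $4$-cycle conditions to propagate the directions of $\{b,d\}$, $\{d,f\}$, $\{a,b\}$ and $\{e,f\}$. Then I would discard every branch that violates (a), contains a directed cycle, or creates a transitive arc through a path such as $x\to d\to c\to y$.

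The expected outcome is as follows.
\begin{itemize}
\item If $c\to a$ and $c\to e$, then $c$ receives only from $d$. The cycle conditions then force the direction of $\{a,b\}$ and of $\{e,f\}$, and the $6$-cycle condition or a transitivity argument excludes this branch.
\item If $e\to c\to a$, the square $(a,c,d,b)$ gives either $d\to b$ together with $b\to a$, or $b\to d$ together with $a\to b$. In the second case $b\to d\to c\to a$ contradicts (c) or the arc $(a,b)$, so $d\to b$ and $b\to a$. The square $(c,e,f,d)$ then yields $f\to d$ and $f\to e$, which is (i).
\item The branch $a\to c\to e$ should symmetrically force $(b,d)$, $(d,f)$, $(b,a)$ and $(f,e)$, which is (iv).
\end{itemize}

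The main obstacle is making sure that no mixed branch survives. The dangerous case is one where the rails point in opposite directions, for instance $c\to a$ but $d\to f$. Excluding it requires combining the square conditions with constraint (a) at $d$, and possibly with the $6$-cycle. I would check these mixed branches explicitly rather than trust symmetry.
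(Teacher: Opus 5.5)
Your plan is correct and is essentially the paper's proof. It is a finite case analysis on the edges at $c$, using only that $c$ and $d$ are neither source nor sink, the two $4$-cycle conditions from Observation~\ref{obs:two_arcs}, and reversal symmetry. The paper branches on $\{c,e\}$ and $\{c,d\}$ instead, so none of its four cases is a dead end and the $6$-cycle is never used.

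Three details in your sketch are wrong and should be fixed when you write it out.

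First, swapping the rails exchanges (i) with (iii) and (ii) with (iv), not (i) with (iv). The symmetry that exchanges (i) with (iv) while fixing $(d,c)$ is $a\leftrightarrow e$, $b\leftrightarrow f$, and that is the one relating your two surviving branches. This does not affect your reduction to $(d,c)$, since reversal alone gives it.

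Second, the branch $c\to a$, $c\to e$ is not excluded by the $6$-cycle or by transitivity. The $6$-cycle has three arcs each way, and no arc inside the ladder is transitive. The branch dies only because the two squares force $(d,b)$ and $(d,f)$, which makes $d$ a source. You do name this mechanism in your last paragraph, but your bullet for this branch gives the wrong reason.

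Third, in the branch $e\to c\to a$ the square $(a,c,d,b)$ forces $(d,b)$ and $(b,a)$ directly. Your second option is the directed cycle $a\to b\to d\to c\to a$, so it is not a genuine alternative. After that, $(f,d)$ comes from $d$ needing an in-arc, and only then does the square $(c,e,f,d)$ yield $(f,e)$.
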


\begin{proof}
Let $A$ denote the arc set of $\cD(G)$. To establish the lemma, we consider all possible orientations of the two edges $\{c, e\}$ and $\{c, d\}$. Throughout the proof, we freely use that $s,t \not \in V_L$ to infer arc directions.

Let us first consider the case $\{(e, c), (d, c)\} \subseteq A$. Then $c$ has in-degree two in $\cD(G)$ and, thus, $(c, a) \in A$. By Observation~\ref{obs:two_arcs}, it follows that $\{(d, b), (b, a)\} \subseteq A$. Then $d$ has out-degree two in $\cD(G)$ and, thus, $(f, d) \in A$. By Observation~\ref{obs:two_arcs}, we have $(f, e) \in A$ and, thus, (i) is satisfied. Next, let us consider  the case $\{(c, e), (c, d)\} \subseteq A$. Noting that the two arcs $(e, c)$ and $(d, c)$ considered in the first case determined the direction of each edge in $G[V_L]$, it now follows from Observation~\ref{obs:symmetry} that $\cD(G)$ satisfies (ii).

Now, if neither $\{(e, c), (d, c)\} \subseteq A$ nor $\{(c, e), (c, d)\} \subseteq A$, observe that there exists a directed path from $e$ to $d$ or from $d$ to $e$ in $\cD(G)$. Let us first consider the case $\{(e, c), (c, d)\} \subseteq A$. By Observation~\ref{obs:two_arcs}, it follows that $\{(e, f), (f, d)\} \subseteq A$. Then $d$ has in-degree two in $\cD(G)$ and, thus, $(d, b) \in A$. Again by Observation~\ref{obs:two_arcs}, we have $\{(c, a), (a, b)\} \subseteq A$ and, thus, (iii) is satisfied. Lastly, let us consider $\{(d, c), (c, e)\} \subseteq A$. Noting that the two arcs $(e, c)$ and $(c, d)$ considered in the last case determined the direction of each edge in $G[V_L]$, it now again follows from Observation~\ref{obs:symmetry} that $\cD(G)$ satisfies (iv). 

We conclude that $\cD(G)$ satisfies exactly one of the four properties (i)--(iv).    
\end{proof}

\begin{lemma}\label{lem:fork}
Let $G=(V,E)$ be a connected undirected graph whose vertices have degree at most three, let $s,t\in V$, and let $\cD(G)$ be a non-transitive $st$-orientation of $G$ with arc set $A$.
Let 
\[
V_F = \{x,x_0,x_1,v\} \cup V_0 \cup V_1 \text{ with } V_i = \{a_i, b_i, \ldots, f_i\} \text{ and } i\in\{0,1\}
\]
be a subset of $V$ such that $\{s,t\} \cap (V_0 \cup V_1 \cup \{v\}) = \emptyset$ and the induced subgraph $G[V_F]$ is isomorphic to the fork gadget. Then 
\[
\{(x,v),(a_0,x_0),(a_1,x_1)\} \subseteq A \text{ or } \{(v,x),(x_0,a_0),(x_1,a_1)\} \subseteq A.
\]
Furthermore, depending on the orientation of $\{v,x\}$ in $\cD(G)$, one of the  following properties is satisfied:
\begin{enumerate}[\rm (I)]
\item  If $(x,v) \in A$, then $v$ has in-degree one and out-degree two, and $a_0$ and $a_1$ have in-degree two and out-degree one in $\cD(G)$.
\item  If $(v,x) \in A$, then $v$ has in-degree two and out-degree one, and $a_0$ and $a_1$ have in-degree one and out-degree two in $\cD(G)$.
\end{enumerate}
\end{lemma}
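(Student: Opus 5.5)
We first reduce to one case. The hypotheses are invariant under reversing all arcs: $\{s,t\}\cap(V_0\cup V_1\cup\{v\})=\emptyset$ is symmetric in $s$ and $t$, and Observation~\ref{obs:symmetry} turns a non-transitive $st$-orientation into a non-transitive $ts$-orientation. Reversal exchanges the two alternatives of the lemma and exchanges (I) with (II). So it suffices to assume $(x,v)\in A$ and derive $(a_0,x_0),(a_1,x_1)\in A$ together with the degree claims in (I). Note that (I) follows almost at once from this. Since $v\notin\{s,t\}$ and $v$ has degree three, once we know that $v$ has only one incoming arc (namely $(x,v)$), it has out-degree two. Similarly, $a_i$ has degree three in $G$, namely $x_i$ and its two ladder neighbours $b_i,c_i$, so $(a_i,x_i)\in A$ forces in-degree two if both ladder edges point into $a_i$.

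The main tool is Lemma~\ref{lem:ladder}, applied separately to $G[V_0]$ and $G[V_1]$. These are induced copies of $L_3$ avoiding $s$ and $t$, so in each ladder the orientation is one of the four patterns (i)--(iv). Patterns (i) and (iii) contain $(c_i,a_i)$, and (i) additionally contains $(b_i,a_i)$, so $a_i$ then has in-degree two and must send its third arc to $x_i$. Patterns (ii) and (iv) contain $(a_i,c_i)$, and (ii) additionally contains $(a_i,b_i)$, so $a_i$ then has out-degree two and must receive from $x_i$. The whole lemma thus reduces to the following claim: if $(x,v)\in A$, then ladder $i$ is in pattern (i) for both $i$; and if $(v,x)\in A$, ladder $i$ is in pattern (ii). In particular, patterns (iii) and (iv), the ``mixed'' ones, must be excluded.

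To prove the claim, I would use the edges of the fork gadget that join $v$ to the two ladders and join the remaining degree-two corners of the ladders ($e_i,f_i$, and $b_i$) to each other or to $v$, as in Figure~\ref{fig:fork-gadget-cubic}. With $(x,v)$ fixed as the unique possible in-arc of $v$ (after checking, via Observation~\ref{obs:two_arcs}, that $v$ cannot have two in-arcs), both other edges at $v$ point away from $v$ into the ladders. This fixes the direction of one corner edge in each ladder, which already rules out two of the four patterns for that ladder. For the remaining mixed pattern, I would look for an underlying cycle through $v$ and both ladders, or through a corner-connecting edge. In that pattern, such a cycle has exactly one arc against the direction of travel, so it yields a transitive arc, contradicting Observation~\ref{obs:two_arcs}. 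Alternatively, the pattern would force a corner vertex to be a source or sink, contradicting $s,t\notin V_0\cup V_1$.

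The hard part is this exclusion of the mixed patterns (iii)/(iv), since it depends on exactly how the ladders are wired to $v$ and to each other. I would proceed by enumerating the at most $4\times 4$ pattern pairs for the two ladders and discarding each one either by the degree constraint at some vertex other than $s,t$ or by exhibiting an underlying cycle with a single reversed arc.
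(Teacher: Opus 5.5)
\textbf{There is a genuine gap: the proposal defers the substantive step, the exclusion of the mixed patterns, and the shortcut it sketches for that step does not work.}

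Your reduction is correct and matches the paper's. By Lemma~\ref{lem:ladder} each ladder $G[V_i]$ is in one of the patterns (i)--(iv). Once both ladders are in (i) (resp.\ both in (ii)), the arcs at $a_0,a_1$, at $f_0,f_1$ and hence at $v$ are forced, which gives (I) (resp.\ (II)).

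The step ``after checking, via Observation~\ref{obs:two_arcs}, that $v$ cannot have two in-arcs'' is not a local check. Suppose $(x,v),(f_0,v),(v,f_1)\in A$. Then ladder $0$ may a priori be in (ii), (iii) or (iv), and ladder $1$ in (i), (iii) or (iv). Refuting this already requires excluding the mixed patterns, so the argument is circular.

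Even granting $(v,f_i)\in A$, this excludes only pattern (ii) for ladder $i$, the only pattern in which $f_i$ has two in-arcs inside the ladder. In both (iii) and (iv), $f_i$ has one in-arc and one out-arc inside the ladder, so both mixed patterns survive, not just one. What remains is an unexecuted enumeration of $16$ pattern pairs, which is a plan rather than a proof.

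The missing idea is the coupling of the two ladders through the cross edges $\{b_0,e_1\}$ and $\{b_1,e_0\}$. Suppose ladder $0$ is in (iii).
\begin{enumerate}
\item Inside the ladder, $b_0$ has two in-arcs and $e_0$ has two out-arcs. This forces $(b_0,e_1),(b_1,e_0)\in A$ and yields the directed path $b_1,e_0,f_0,d_0,b_0,e_1$.
\item If ladder $1$ contains $(f_1,d_1),(d_1,b_1)$, the underlying cycle closed by $\{e_1,f_1\}$ has at most one arc against the direction of travel. This contradicts Observation~\ref{obs:two_arcs}.
\item Otherwise $(b_1,d_1),(d_1,f_1)\in A$, so $b_1$ has two out-arcs. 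This forces $(a_1,b_1)$ and hence pattern (ii) on ladder $1$. Then $f_1$ has two in-arcs, so $(f_1,v)\in A$. The underlying cycle $v,f_0,d_0,b_0,e_1,f_1,v$ now violates Observation~\ref{obs:two_arcs} whichever way $\{v,f_0\}$ is oriented.
\end{enumerate}
Pattern (iv) is excluded by reversal, and ladder $1$ is handled by the symmetry of the gadget.

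Finally, the pair with (i) on ladder $0$ and (ii) on ladder $1$ must be excluded separately. It forces $(v,f_0),(f_1,v)\in A$. Then $\{b_0,e_1\}$ either closes the directed path $e_1,f_1,v,f_0,d_0,b_0$ into a directed cycle or makes $(e_1,b_0)$ transitive.

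In this order the direction of $\{x,v\}$ is an output of the argument rather than an assumption. Your preliminary reduction to $(x,v)\in A$ is therefore harmless but unnecessary.
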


\begin{proof}
\begin{figure}
    \centering
    \includegraphics[width=\textwidth]{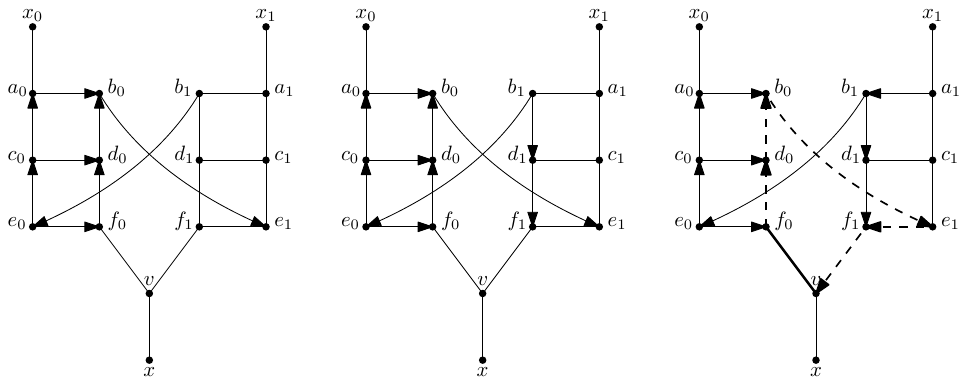}
    \caption{Left: The fork gadget in which the edges of $G[\{a_0,\ldots,f_0\}]$ are directed as in Figure~\ref{fig:ladder-graph}(iii) together with the two inferred arcs $(b_0,e_1)$ and $(b_1,e_0)$. Middle: The arcs $(b_1, d_1)$ and $(d_1, f_1)$ as inferred in the proof of Lemma~\ref{lem:fork}. Right: Directing $\{v, f_0\}$ results either in a transitive arc or a directed cycle with the dashed arcs.} 
    \label{fig:fork-cases-proof}
\end{figure}
For each $i \in \{0,1\}$, let 
\[
E_i = \{\{a_i, b_i\}, \{c_i, d_i\}, \{e_i, f_i\}, \{a_i, c_i\}, \{b_i, d_i\}, \{c_i, e_i\}, \{d_i, f_i\}\}
\]
be the edge set of the induced subgraph $G[V_i]$ that is isomorphic to the ladder graph $L_3$. By Lemma~\ref{lem:ladder}, there are four possibilities, which are illustrated in Figure~\ref{fig:ladder-graph}, to orient the edges in $E_i$ in obtaining $\cD(G)$: 
\begin{enumerate}[(i)]
\item $\{(e_i,c_i), (c_i, a_i), (f_i, d_i), (d_i,b_i), (f_i, e_i), (d_i, c_i), (b_i, a_i)\}$, 
\item $ \{(c_i,e_i), (a_i, c_i), (d_i, f_i), (b_i,d_i), (e_i, f_i), (c_i, d_i), (a_i, b_i)\}$,  
\item $ \{(e_i,c_i), (c_i, a_i), (f_i, d_i), (d_i,b_i), (a_i, b_i), (c_i, d_i), (e_i, f_i)\}$, or 
\item $\{(c_i,e_i), (a_i, c_i), (d_i, f_i), (b_i,d_i), (b_i, a_i), (d_i, c_i), (f_i,e_i)\}$.
\end{enumerate}

First, we show that the edges in $E_0$ are not oriented as in (iii) or (iv). Assume towards a contradiction that the edges in $E_0$ are oriented as in (iii). Noting that $e_0$ and $b_0$ are neither a source nor a sink in the fork gadget, it follows that $\{(b_0, e_1), (b_1, e_0)\} \subseteq A$. This setup is illustrated on the left-hand side of Figure~\ref{fig:fork-cases-proof}. Now, since one of (i)--(iv) applies, we have either $\{(b_1, d_1), (d_1, f_1)\} \subseteq A$ or $\{(f_1, d_1), (d_1, b_1)\} \subseteq A$. If the latter applies then, together with the arcs $(b_1, e_0),\, (e_0, f_0),\, (f_0, d_0),\, (d_0, b_0)$ and $(b_0, e_1)$, we have a directed cycle if $(e_1,f_1)\in A$ or a transitive arc if $(f_1,e_1)\in A$. Thus, $\{(b_1, d_1), (d_1, f_1)\} \subseteq A$ (see middle of Figure~\ref{fig:fork-cases-proof}).
In turn, $b_1$ has out-degree two in $\cD(G)$ and, so, $(a_1, b_1) \in A$. Again by Lemma~\ref{lem:ladder}, this implies that $(e_1, f_1) \in A$. Then $f_1$ has in-degree two in $\cD(G)$ and, thus, $(f_1, v) \in A$. Now, if $(v, f_0)\in A$, then $$(v,f_0),(f_0,d_0),(d_0,b_0),(b_0,e_1),(e_1,f_1)$$ is a directed cycle and, otherwise, $(f_0,v)$ is a transitive arc in $\cD(G)$ (see right-hand side of Figure~\ref{fig:fork-cases-proof}). It follows that the edges in $E_0$ are not oriented as in (iii). Moreover, by Observation~\ref{obs:symmetry}, the edges in $E_0$ are not oriented as in (iv) either. 

\begin{figure}[t]
    \centering
    \includegraphics[width=.7\textwidth]{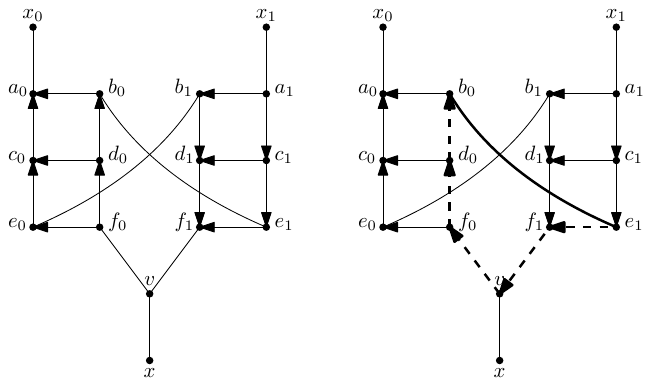}
    \caption{Left: The fork gadget in which the edges of $G[\{a_i,\ldots,f_i\}]$ are directed as in Figure~\ref{fig:ladder-graph}(i) for $i = 0$ and as in Figure~\ref{fig:ladder-graph}(ii) for $i=1$. Right: The two arcs $(f_1, v)$ and $(v,f_0)$ inferred in the proof of Lemma~\ref{lem:fork} such that orienting $\{b_0, e_1\}$ results either in a transitive arc or a directed cycle with the dashed arcs.} 
    \label{fig:for-cases-1-with-2}
\end{figure} 
Second, assume that the edges in $E_0$ are oriented as in (i). By symmetry of the fork gadget, it follows from the previous paragraph that the edges in $E_1$ are oriented as in (i) or (ii). If  (ii) applies, we obtain the arcs as illustrated on the left-hand side of Figure~\ref{fig:for-cases-1-with-2}. Then $f_0$ has out-degree two in $\cD(G)$ and $f_1$ has in-degree two in $\cD(G)$. Thus, $\{(f_1, v), (v,f_0)\} \subseteq A$. But now we cannot orient $\{b_0, e_1\}$ without introducing a transitive arc or a directed cycle (see right-hand side of Figure~\ref{fig:for-cases-1-with-2}). Hence, the edges in $E_1$ are oriented as in (i). Then $a_0$ and $a_1$ have in-degree two in $\cD(G)$, and $f_0$ and $f_1$ have out-degree two in $\cD(G)$. Thus, 
\[
\{(a_0, x_0), (a_1,x_1), (v, f_0), (v, f_1)\} \subseteq A
\]
and, as $v$ now has out-degree two, $(x,v) \in A$. In summary, we have $\{(x,v),(a_0,x_0),(a_1,x_1)\} \subseteq A$, $v$ has in-degree one and out-degree two, and $a_0$ and $a_1$ have in-degree two and out-degree one in $\cD(G)$.

Third, assume that the edges in $E_0$ are oriented as in (ii). By symmetry of the fork gadget, it  follows from the previous  two paragraphs that the edges in $E_1$ are oriented as in~(ii). Then $f_0$ and $f_1$ have in-degree two in $\cD(G)$, whereas $a_0$ and $a_1$ have out-degree two. Therefore, we have $\{(x_0, a_0), (x_1,a_1), (f_0,v), (f_1,v)\} \subseteq A$ and, as $v$ now has in-degree two, we also infer $(v,x) \in A$. In summary, we have $\{(v,x),(x_0,a_0),(x_1,a_1)\} \subseteq A$,  $v$ has in-degree two and out-degree one, and $a_0$ and $a_1$ have in-degree one and out-degree two in $\cD(G)$. This completes the proof of the lemma.
\end{proof}

\begin{figure}[t]
    \centering
    \includegraphics[width=.8\textwidth]{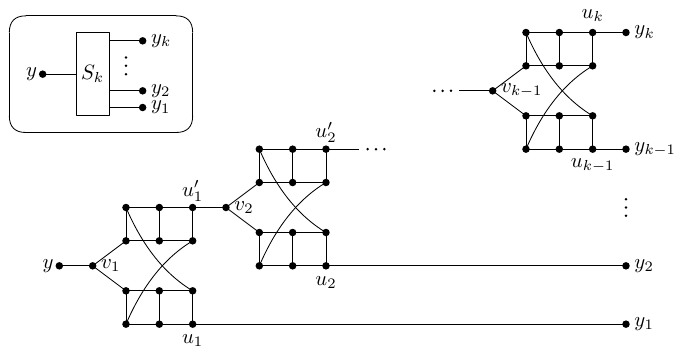}
    \caption{The split gadget $S_k$ and its schematic representation (top left).}
    \label{fig:split-gadget-cubic}
\end{figure}

We end this section by combining several copies of the fork gadget into a new gadget which we call the {\it split gadget} $S_k$. While the fork gadget links the orientation of three edges or, in other words,  enforces directions on the two edges $\{a_0,x_0\}$ and $\{a_1,x_1\}$ given the direction of the edge $\{v,x\}$, the split gadget enforces  directions on $k \geq 2$ edges given the direction of a single edge. In particular, for $k=2$, the split gadget is isomorphic to the fork gadget. For any integer $k\geq 2$, Figure~\ref{fig:split-gadget-cubic} illustrates  the split gadget and its schematic representation that will be useful in Section~\ref{sec:nt-st-orientation-cubic} when the split gadget is  part of a larger constructions. We note that the split gadget is the counterpart of the split gadget as presented in~\cite[Figure 5]{binucci22} and follows the same ideas.

The next lemma is the analog of Lemma~\ref{lem:fork} for the split gadget. The notation that is used in the lemma follows that of Figure~\ref{fig:split-gadget-cubic}.

\begin{lemma}\label{lem:split}
Let $G=(V,E)$ be a connected undirected graph whose vertices have degree at most three, let $s,t\in V$, and let $\cD(G)$ be a non-transitive $st$-orientation of $G$ with arc set $A$. Further, let $V'$ be a subset of $V$ such that the induced subgraph $G[V']$ is isomorphic to the split gadget $S_k$ for some $k\geq 2$ such that no degree-3 vertex in $G[V']$ is $s$ or $t$. Then 
\begin{align*}
&\{(y,v_1),(u_1,y_1),(u_2,y_2), \ldots, (u_k, y_k)\} \subseteq A \text{ or }\\
&\{(v_1,y),(y_1,u_1),(y_2,u_2), \ldots, (y_k, u_k)\} \subseteq A.
\end{align*}
Furthermore, depending on the orientation of $\{v_1,y\}$ in $\cD(G)$, one of the following properties holds.
\begin{enumerate}[\rm (I)]
\item If $(y,v_1) \in A$, then $v_1$ has in-degree one and out-degree two and each of $u_1,u_2,\ldots u_k$ has in-degree two and out-degree one in $\cD(G)$.
\item If $(v_1,y) \in A$, then $v_1$ has in-degree two and out-degree one and each of  $u_1,u_2,\ldots u_k$ has in-degree one and out-degree two in $\cD(G)$.
\end{enumerate}
\end{lemma}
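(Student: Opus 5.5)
The plan is to argue by induction on $k$, using Lemma~\ref{lem:fork} as both the base case and the inductive step. The split gadget $S_k$ in Figure~\ref{fig:split-gadget-cubic} is built from $k-1$ copies of the fork gadget chained together. I read the figure as follows. The first fork has its stem edge $\{y,v_1\}$. One of its two outputs leads into the stem of the next fork, say through an edge $\{u_1',v_2\}$ playing the role of $\{a_i,x_i\}$ and of $\{x,v\}$ simultaneously. The other output gives $\{u_1,y_1\}$. This continues until the last fork supplies the two outputs $\{u_{k-1},y_{k-1}\}$ and $\{u_k,y_k\}$. For $k=2$ the split gadget is the fork gadget itself, so the statement is exactly Lemma~\ref{lem:fork}: $v_1$ plays $v$, $y$ plays $x$, and $u_i,y_i$ play $a_{i-1},x_{i-1}$.

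For the inductive step I would first check that each constituent fork satisfies the hypotheses of Lemma~\ref{lem:fork}. The graph $G$ is connected with maximum degree three, and each fork's vertex set induces a copy of the fork gadget in $G$. This holds because the forks in $S_k$ are joined only through the designated $x/x_i$ attachment edges, and $G[V']$ is induced. The vertices $v$ and $V_0\cup V_1$ of each fork are degree-3 vertices of $G[V']$, so by assumption none of them is $s$ or $t$.

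Now suppose $(y,v_1)\in A$. Lemma~\ref{lem:fork}(I) applied to the first fork gives $v_1$ in-degree one and out-degree two. It also orients both output edges away from the fork's $a$-vertices, which then have in-degree two and out-degree one. One output is $(u_1,y_1)$. The other output edge is, by construction, the stem edge $\{x,v\}$ of the second fork, oriented as $(x,v)$ in that fork's labelling. Applying Lemma~\ref{lem:fork}(I) to the second fork propagates the same orientation, and iterating (or invoking the induction hypothesis on the sub-split-gadget $S_{k-1}$ hanging off that output) yields $(u_i,y_i)\in A$ for all $i$. The degree claims for the $u_i$ follow from (I) of the fork whose output they are. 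The case $(v_1,y)\in A$ is symmetric via (II), or directly from Observation~\ref{obs:symmetry}. Since $\{y,v_1\}$ is oriented one way or the other, the stated dichotomy follows.

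The main obstacle is bookkeeping: matching the labels of Figure~\ref{fig:split-gadget-cubic} to the fork labels precisely. In particular, I need to confirm that the connecting edge between consecutive forks is simultaneously an output $\{a_i,x_i\}$ of one fork and the stem $\{x,v\}$ of the next, with consistent orientation semantics. I also need to confirm that each sub-fork is an induced copy of the fork gadget, so that no extra edges between forks break the application of Lemma~\ref{lem:fork}. If the figure instead chains forks through an intermediate vertex, the same argument applies with that vertex playing $x$ for one fork and $x_i$ for the other.
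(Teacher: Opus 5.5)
Your proposal is correct and follows the paper's proof. The paper likewise applies Lemma~\ref{lem:fork} to the first fork to obtain $(u_1,y_1)$ and $(u_1',v_2)$, then treats $(u_1',v_2)$ as the stem arc of the next fork and iterates along the chain up to $v_{k-1}$. It reads off the degree claims from part (I) of each application and handles $(v_1,y)\in A$ analogously; your explicit check that each constituent fork is an induced copy satisfying the hypotheses of Lemma~\ref{lem:fork} is a detail the paper leaves implicit.
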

\begin{proof}
First assume that $(y,v_1) \in A$. Then $\{(u_1,y_1), (u'_1, v_2)\} \subseteq A$ by Lemma~\ref{lem:fork}. Now, as $(u'_1, v_2) \in A$, it follows again by Lemma~\ref{lem:fork} that $\{(u_2,y_2), (u'_2, v_3)\} \subseteq A$. Repeating this argument for $$(u'_2, v_3), (u'_3,v_4),\ldots, (u'_{k-2}, v_{k-1})$$ establishes $\{(y,v_1),(u_1,y_1),(u_2,y_2), \ldots, (u_k, y_k)\} \subseteq A$. 
Moreover, as $(y, v_1) \in A$ and $(u_{i-1}',v_i) \in A$ for each $2 \leq i \leq k-1$, it again follows from repeated applications of Lemma~\ref{lem:fork}  that $v_1$ has in-degree one and out-degree two and each of $u_1,u_2,\ldots u_k$ has in-degree two and out-degree one. 

Second assume that $(v_1,y) \in A$. Then, it follows by arguments that are analogous to those used in the first paragraph that $$\{(v_1,y),(y_1,u_1),(y_2,u_2), \ldots, (y_k, u_k)\} \subseteq~A,$$ $v_1$ has in-degree two and out-degree one, and each of $u_1,u_2,\ldots u_k$ has in-degree one and out-degree two. 
\end{proof}

\section{Non-transitive st-orientations of cubic graphs}\label{sec:nt-st-orientation-cubic}

In this section, we show that \NTstO\ is NP-complete on cubic graphs, via a polynomial-time Karp reduction from a restricted variant of \NAESAT. Although our overall approach is similar to that taken by Binucci et al.~\cite{binucci22}, the new fork gadget, as introduced in the last section, is key in establishing our result.

We start by building some intuition about the construction of an undirected graph $G$ with two designated vertices $s$ and $t$ from a collection of clauses $C$ over a set of variables $X$. A clause $c_j$ is represented by a single vertex $c_j \not \in \{s,t\}$ in $G$ such that its incident edges correspond to the three literals that the clause $c_j$ contains. For any $st$-orientation~$\cD(G)$, the orientation of these edges correspond to truth values of the represented literals. More specifically, an orientation of an edge towards $c_j$ corresponds to a true literal and an orientation of an edges away from $c_j$ corresponds to a false literal, Observe that $c_j$ has in-degree and out-degree at least one in $\cD(G)$ since $c_j\notin \{s,t\}$. 
The remaining parts of $G$ are constructed by combining multiple copies of the fork gadget such that
\begin{enumerate}[(i)]
\item $\cD(G)$ correctly simulates a truth assignment for $X$, that is, the orientations of edges representing a literal are consistent across clause vertices, and   
\item $G$ has a non-transitive $st$-orientation if and only if $C$ is nae-satisfiable. 
\end{enumerate}

In contrast to Binucci et al.~\cite{binucci22}, who reduced an instance of \NAESAT\ to an undirected graph that contains degree-2, degree-3, and degree-4 vertices, our construction results in an undirected graph that contains degree-2 and degree-3 vertices only. As we will later see, the degree-2 vertices can safely be suppressed so that we eventually obtain a cubic graph. Avoiding degree-2 and degree-4 vertices required significant changes  to the construction and arguments presented in~\cite{binucci22}. The central building block in achieving this goal is the fork gadget as presented in the last section. In addition to this new gadget, we also reduce from a variant of  \NAESAT\ in which each literal is a variable and each variable appears exactly four times in a collection of clauses. These constraints substantially simplify the upcoming construction and make the arguments more concise, but are not strictly necessary to show that \NTstO\ is NP-complete for cubic graphs.

We now formally state the variant of  \NAESAT\ that we use to establish our NP-completeness result. 

\noindent\fbox{%
    \parbox{.97\textwidth}{%
\begin{prb}
\noindent\RNAESAT\\
{\bf Instance.} A set of variables $X$ and collection of clauses $C$ such that 
\begin{enumerate}[(i)]
\item for each $c_j \in C$, we have $c_j \subseteq X$ and $|c_j| = 3$, and
\item each variable $x_i \in X$ appears in exactly four clauses of $C$.
\end{enumerate}
{\bf Question.} Is there a truth assignment $\beta\colon X \rightarrow \{T,F\}$ that nae-satisfies $C$? \end{prb}
}
}

\noindent It was shown in~\cite[Theorem.~1]{darmann20} that \RNAESAT\ is NP-complete.

\paragraph*{\textbf{Example.}}
Let $X = \{x_1, x_2, \ldots, x_6\}$ be a set of variables, and let
\begin{align*}
C_{\text{yes}}= \{ &\{x_1, x_2, x_3\}, \{x_1, x_2, x_4\}, \{x_1, x_2, x_5\}, \{x_1, x_2, x_6\}, \\
       &\{x_3, x_4, x_5\}, \{x_3, x_4, x_6\}, \{x_3, x_5, x_6\}, \{x_4, x_5, x_6\}\}
\end{align*}
be a collection of clauses over $X$. Then, $\cI = (X, C_{\text{yes}})$ is a yes-instance of \RNAESAT\ as the truth assignment $\beta\colon X \rightarrow \{T, F\}$ with $\beta(x) = T$ for $x \in \{x_1,x_3,x_4\}$ and $\beta(x) = F$ for $x \in \{x_2,x_5,x_6\}$ nae-satisfies each clause. 

The remainder of the section is organized as follows. We next introduce three additional gadgets: the  source gadget, the sink gadget, and the variable gadget whose central building blocks are copies of the fork gadget. As the names suggest,  the source and sink gadget contains a source and sink vertex, respectively, when its edges are oriented. For each of the source, sink, and variable gadget, we then establish  properties of the orientation of its edges if the gadget is an induced subgraph of an undirected graph that has a non-transitive $st$-orientation (Lemmas~\ref{lem:source}--\ref{lem:variable}). Subsequently, we consider two undirected graphs $G$ and $G'$ such that $G'$ is a subdivision of $G$ and analyze which properties must be satisfied such that the existence of a non-transitive $st$-orientation in one of $G$ and $G'$ guarantees the existence of such an orientation in the other graph (Lemmas~\ref{lem:subdivision} and~\ref{lem:suppress-W}). Lastly, we establish the main result of this section, showing that \NTstO\ is NP-complete for cubic graphs (Theorem~\ref{thm:main-nt-st-orientation}).

\begin{figure}[t]
    \centering
    \includegraphics[width=\textwidth]{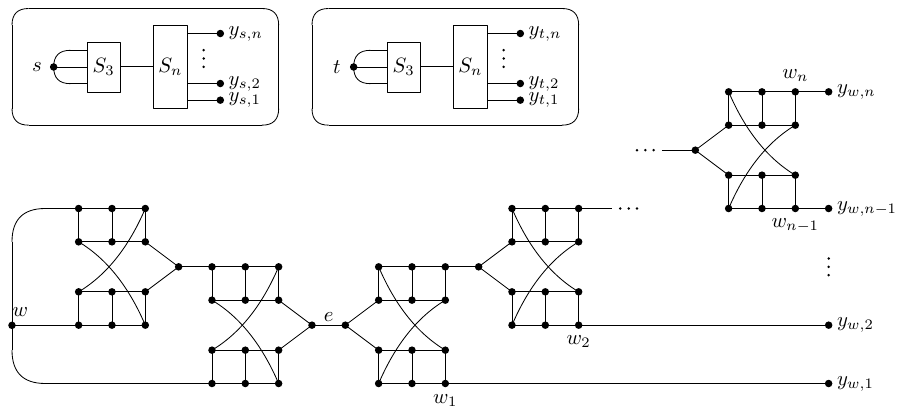}
    \caption{The source gadget with $w=s$ and $w_i=s_i$ for all $i\in\{1,2\ldots,n\}$, and the sink gadget with $w=t$ and $w_i=t_i$ for all $i\in\{1,2\ldots,n\}$, where $n$ is the number of variables of an instance of \RNAESAT. The schematic representations of the source and sink gadgets are shown at the top.}
    \label{fig:sosi-gadget-cubic}
\end{figure}
\paragraph*{\textbf{The source gadget and sink gadgets.}} Let $\cG(s)$ denote the {\it source gadget}, and let $\cG(t)$ denote the {\it sink gadget} as illustrated in Figure~\ref{fig:sosi-gadget-cubic}. Observe that $\cG(s)$ and $\cG(t)$ are isomorphic. Essentially, each of $\cG(s)$ and $\cG(t)$ consists of the two split gadgets $S_3$ and $S_n$ that are connected via a cut edge. Intuitively, for an instance of \RNAESAT\  with $n$ variables, the source gadget simulates a source vertex with degree $n$ by enforcing the directions of $n$ specific edges in any acyclic and non-transitive orientation such that $s$ is the unique source. The sink gadget has an analogous intuition. The notation that is used in the next two lemmas follows that of Figure~\ref{fig:sosi-gadget-cubic}.

\begin{lemma}\label{lem:source}
Let $G=(V,E)$ be a connected undirected graph whose vertices have degree at most three, let $s,t\in V$, and let $\cD(G)$ be a non-transitive $st$-orientation of $G$ with arc set $A$. Further, let $V'$ be a subset of $V\setminus\{t\}$ such that the induced subgraph $G[V']$ is isomorphic to $\cG(s)$. Then the following statements hold:
\begin{enumerate}[\rm (i)]
\item there exists an acyclic and non-transitive orientation of $G[V']$ such that $s$ is the unique source and the vertices $y_{s,1}, y_{s,2}, \ldots, y_{s,n}$ are precisely the sinks, 
\item $\{(s_1,y_{s,1}), (s_2,y_{s,2}), \ldots, (s_n,y_{s,n})\} \subseteq A$, and
\item  the vertices $s_1, s_2, \ldots, s_n$ have in-degree two and out-degree one in~$\cD(G)$.
\end{enumerate}
\end{lemma}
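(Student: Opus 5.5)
The plan is to derive (ii) and (iii) from one structural fact about cut edges together with Lemma~\ref{lem:split}, and to prove (i) by exhibiting an explicit orientation built from the orientation of the fork gadget shown in Figure~\ref{fig:fork-gadget-cubic}. I am reading $\cG(s)$ from Figure~\ref{fig:sosi-gadget-cubic} as follows: $s$ is attached to the input side of a copy of $S_3$; one output of this $S_3$ is joined by a cut edge to the input edge $\{y,v_1\}$ of a copy of $S_n$; and the outputs of $S_n$ are the edges $\{s_i,y_{s,i}\}$, with $s_i$ in the role of $u_i$ and $y_{s,i}$ in the role of $y_i$. The remaining outputs of $S_3$ connect back to $s$. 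Since the figure is not available to me, the details of how $s$ is wired in may differ, but the argument below only uses the general shape.

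\textbf{Cut-edge fact.} First I establish an auxiliary claim. Let $\{p,q\}$ be a cut edge of $G$ such that $s$ lies in the component $K_p$ of $G-\{p,q\}$ containing $p$ and $t$ does not lie in $K_p$. Then $(p,q)\in A$.

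To prove this, suppose instead that $(q,p)\in A$. Every vertex of $K_p$ lies on a directed path from $s$ to $t$, because $\cD(G)$ is an $st$-orientation. Such a path must leave $K_p$, and the only edge leaving $K_p$ is $\{p,q\}$, which is oriented into $K_p$. So no directed path from $s$ reaches $t$, a contradiction.

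\textbf{Statement (ii).} By hypothesis $t\notin V'$, so every cut edge of $G[V']$ that separates $s$ from the rest of the gadget is directed away from $s$. In particular, the edge joining $S_3$ to $S_n$ is directed into $v_1$ of $S_n$, which gives $(y,v_1)\in A$ in the notation of Lemma~\ref{lem:split}. The degree-3 vertices of the split gadgets are neither $s$ nor $t$, so Lemma~\ref{lem:split} applies, and case (I) yields $(u_i,y_i)\in A$ for all $i$, that is, $(s_i,y_{s,i})\in A$. Independently, each edge $\{s_i,y_{s,i}\}$ is itself a cut edge with $s$ on the $s_i$-side, so the cut-edge fact gives the same conclusion directly.

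\textbf{Statement (iii).} Case (I) of Lemma~\ref{lem:split} states that each $u_i=s_i$ has in-degree two and out-degree one, which is exactly (iii).

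\textbf{Statement (i).} I orient every fork gadget inside both split gadgets as in the right-hand part of Figure~\ref{fig:fork-gadget-cubic}, which is the pattern of case (I) of Lemma~\ref{lem:fork}. I direct all edges at $s$ away from $s$, and I direct the connecting cut edge and the edges $\{s_i,y_{s,i}\}$ away from $s$. Then I verify three things.
\begin{itemize}
\item \emph{No unwanted sources or sinks.} Every vertex other than $s$ and the $y_{s,i}$ has in- and out-degree at least one. This is a local check that uses the degree statements of Lemma~\ref{lem:fork}(I) at the vertices $v$ and $a_i$ where forks are glued together.
\item \emph{Acyclicity.} Any directed cycle lies inside a biconnected component. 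A biconnected component is either contained in a single fork or is the small block around $s$, and each of these is acyclic by inspection.
\item \emph{Non-transitivity.} A cut edge cannot be transitive. A directed path between the endpoints of a non-cut edge stays inside that edge's biconnected component. So it suffices to check the given orientation of each fork, which is stated to be non-transitive in Figure~\ref{fig:fork-gadget-cubic}, and to check the block containing $s$.
\end{itemize}

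\textbf{Main obstacle.} The hardest step is the block containing $s$. Here the cycles through $s$ and the outputs of $S_3$ that return to $s$ must be checked by hand, both for transitive arcs (using Observation~\ref{obs:two_arcs}) and for whether the forced directions at $s$ are consistent with case (I) of Lemma~\ref{lem:split} for this $S_3$. I would first settle which edges at $s$ are inputs and which are outputs of $S_3$, and then confirm that orienting all of them away from $s$ still lets Lemma~\ref{lem:split} propagate consistently.
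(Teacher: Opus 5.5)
Your (ii) and (iii) follow the paper's argument. The cut edge $e$ between the two split gadgets is directed away from the block containing $s$, and Lemma~\ref{lem:split}(I) applied to $S_n$ gives the arcs $(s_i,y_{s,i})$ and the degrees of the $s_i$. Your cut-edge fact spells out what the paper only asserts. Two corrections are needed.
\begin{itemize}
\item The fact requires a cut edge of $G$, not of $G[V']$. For $e$ this holds: every vertex on the $s$-side of $e$ already has degree three in $G[V']$, so none has a neighbour outside $V'$.
\item Your ``independent'' claim that each $\{s_i,y_{s,i}\}$ is a cut edge of $G$ is false in general. In the graph $G''$ of Theorem~\ref{thm:main-nt-st-orientation}, $y_{s,i}$ leads through $\cG(x_i)$ and the sink gadget back into $\cG(s)$. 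Drop that claim.
\end{itemize}

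The genuine gap is in (i). Your prescription is to orient every fork in the case-(I) pattern and to direct every edge at $s$ away from $s$. This is contradictory under any wiring. Since $s$ has degree three and $S_3$ has only one input edge, $s$ must absorb at least two output edges of $S_3$. The case-(I) pattern directs output edges away from the $u_k$, so it would put at least two arcs into $s$. This is exactly the ``main obstacle'' you flag but leave unresolved. Your specific reading, with $s$ on the input edge of $S_3$, also makes (i) false. There the input arc is $(s,v_1)$, and the propagation behind Lemma~\ref{lem:split}(I) uses only that interior vertices are neither sources nor sinks. It would then force both returning outputs to be arcs into $s$.

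In the actual gadget, $s$ is adjacent to the three output vertices $u_1,u_2,u_3$ of $S_3$. The edge $e$ joins the \emph{input} vertex $v_1$ of $S_3$ to the input vertex $v_1$ of $S_n$. So the forks of $S_3$ must take the reversed pattern of Lemma~\ref{lem:fork}(II): arcs $(s,u_k)$ and $(v_2,u_1')$, with $v_1$ of $S_3$ sending its unique out-arc along $e$. Only the forks of $S_n$ use pattern (I). With this change your block-wise verification works:
\begin{itemize}
\item The block of $s$ is $s$ together with both forks of $S_3$, not a small block.
\item It is acyclic. $s$ has no in-arcs, the only other link between the two forks is the single arc $(v_2,u_1')$, and each fork is acyclic.
\item The arcs $(s,u_k)$ and $(v_2,u_1')$ are not transitive, because their heads have in-degree one.
\item No directed path leaves a fork and re-enters it. The only ways back are through $s$ or across a bridge.
\end{itemize}
The paper simply exhibits this orientation in a figure.
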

\begin{proof}
\begin{figure}[t]
    \centering
    \includegraphics[width=\textwidth]{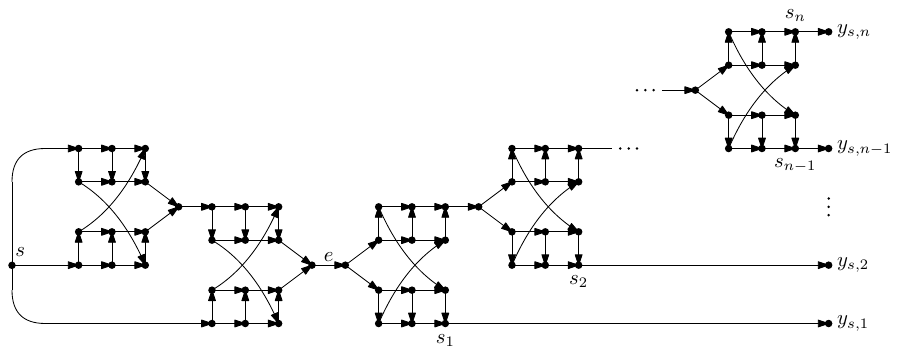}
    \caption{An acyclic and non-transitive orientation of the source gadget with a single source $s$ and sinks $y_{s,i}$ with $1 \leq i \leq n$. Reversing each arc direction yields an acyclic and non-transitive orientation of the sink gadget with a single sink $t$ and sources $y_{t,i}$ for $1 \leq i \leq n$, and where $t = s$ and $t_i = s_i$.}
    \label{fig:sosi-gadget-cubic-orientation}
\end{figure}
The orientation of the source gadget that is illustrated in Figure~\ref{fig:sosi-gadget-cubic-orientation} establishes (i).  Now, as $e$ is a cut edge of $G$ that is incident to a vertex of the biconnected component $B_s$ that contains $s$, it follows that $e$ is oriented away from $B_s$ in $\cD(G)$. Then, by Lemma~\ref{lem:split}, we have
\[
\{(s_1,y_{s,1}), (s_2,y_{s,2}), \ldots, (s_n,y_{s,n})\} \subseteq A
\]
such that $s_i$ has in-degree two and out-degree one for $1 \leq i \leq n$. This establishes (ii) and (iii).
\end{proof}

\begin{lemma}\label{lem:sink}
Let $G=(V,E)$ be a connected undirected graph whose vertices have degree at most three, let $s,t\in V$, and let $\cD(G)$ be a non-transitive $st$-orientation of $G$ with arc set $A$. Further, let $V'$ be a subset of $V\setminus\{s\}$ such that the induced subgraph $G[V']$ is isomorphic to $\cG(t)$. Then the following statements hold:
\begin{enumerate}[\rm (i)]
\item there exists an acyclic and non-transitive orientation of $G[V']$ such that $t$ is the unique sink and the vertices $y_{t,1}, y_{t,2}, \ldots, y_{t,n}$ are precisely the sources,
\item $\{(y_{t,1}, t_1), (y_{t,2},t_2), \ldots, (y_{t,n}, t_n)\} \subseteq A$,  and
\item the vertices $t_1, t_2, \ldots, t_n$ have in-degree one and out-degree two~in $\cD(G)$.
\end{enumerate}
\end{lemma}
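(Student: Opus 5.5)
The natural route is to derive Lemma~\ref{lem:sink} from Lemma~\ref{lem:source} via the reversal symmetry of Observation~\ref{obs:symmetry}, since $\cG(t)$ is isomorphic to $\cG(s)$, with $t$, $t_i$, $y_{t,i}$ in the roles of $s$, $s_i$, $y_{s,i}$.

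For (i), take the acyclic non-transitive orientation of the source gadget shown in Figure~\ref{fig:sosi-gadget-cubic-orientation} and reverse every arc. Reversal preserves acyclicity, because a directed cycle reverses to a directed cycle. It also preserves non-transitivity, because a directed $u$--$v$ path avoiding $(u,v)$ reverses to a $v$--$u$ path avoiding $(v,u)$. Sources become sinks and vice versa, so $t$ is the unique sink and $y_{t,1},\dots,y_{t,n}$ are precisely the sources.

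For (ii) and (iii), let $\cD'(G)$ be the digraph obtained from $\cD(G)$ by reversing all arcs. By Observation~\ref{obs:symmetry}, $\cD'(G)$ is a non-transitive $ts$-orientation of $G$. Thus $t$ is now the unique source, and $V'\subseteq V\setminus\{s\}$ means the unique sink lies outside $V'$.

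1. Apply Lemma~\ref{lem:source} with the roles of $s$ and $t$ interchanged. Its hypotheses are symmetric in the names of the two poles, except that the source vertex of the gadget must be the global source and the other pole must not lie in $V'$; both conditions hold here.
2. Lemma~\ref{lem:source}(ii) gives $(t_i,y_{t,i})\in A(\cD'(G))$ for all $i$, and Lemma~\ref{lem:source}(iii) says each $t_i$ has in-degree two and out-degree one in $\cD'(G)$.
3. Reversing back to $\cD(G)$ yields $(y_{t,i},t_i)\in A$, and each $t_i$ has in-degree one and out-degree two.

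The only point needing care is confirming that Lemma~\ref{lem:source} really applies to a $ts$-orientation rather than an $st$-orientation. If one prefers not to rely on relabelling, the proof of Lemma~\ref{lem:source} can be redone directly. The cut edge $e$ is incident to the biconnected component $B_t$ containing $t$. Since $t$ is the unique sink and all other vertices of $B_t$ have both an in-arc and an out-arc, $e$ must be oriented into $B_t$; otherwise the part of $G$ beyond $e$ would contain no sink, or $B_t$ would contain no source other than possibly $s\notin V'$, a contradiction. Lemma~\ref{lem:split}(II), applied to the $S_n$ copy, then gives $(y_{t,i},t_i)\in A$ and the stated degrees of the $t_i$. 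The orientation of the cut edge is the one step needing an argument; the rest is mechanical.
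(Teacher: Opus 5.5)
Your proposal is correct. Part (i) and your fallback argument for (ii)--(iii) coincide with the paper's proof, which orients the cut edge $e$ towards $B_t$ and then applies Lemma~\ref{lem:split} (case (II)) to the $S_n$ copy.

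Your primary route is a different packaging. You reverse all of $\cD(G)$, obtain a non-transitive $ts$-orientation by Observation~\ref{obs:symmetry}, and apply Lemma~\ref{lem:source} with the poles renamed. This is legitimate for three reasons:
\begin{enumerate}
\item Lemma~\ref{lem:source} is stated for arbitrary poles.
\item $\cG(t)$ is literally $\cG(s)$ with $w=t$, so the gadget's distinguished vertex is the source of the reversed orientation.
\item The only positional hypothesis, that $V'$ avoid the sink, becomes exactly $V'\subseteq V\setminus\{s\}$.
\end{enumerate}
This makes the sink lemma a formal corollary of the source lemma with no repeated argument, whereas the paper's version is self-contained.

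One expository fix in the fallback: ``the part of $G$ beyond $e$ would contain no sink'' is true however $e$ is oriented, so it is not itself a contradiction. The actual contradiction is this: if $e$ pointed away from $B_t$, no arc would leave the far side of $e$. By acyclicity that side would then contain a sink of $\cD(G)$ different from $t$, which is impossible. Since your main route does not use this step, it is only a wording issue.
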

\begin{proof}
Reversing the orientation of each arc in Figure~\ref{fig:sosi-gadget-cubic-orientation} establishes (i). Again as $e$ is a cut edge of $G$ that is incident to a vertex of the biconnected component $B_t$ that contains $t$, it follows that the arc $e$ is oriented towards $B_t$ in $\cD(G)$. Then, by Lemma~\ref{lem:split}, we have 
$$
\{(y_{t,1}, t_1), (y_{t,2},t_2), \ldots, (y_{t,n}, t_n)\} \subseteq A
$$
such that $t_i$ has in-degree one and out-degree two for $1 \leq i \leq n$. This establishes (ii) and (iii).
\end{proof}

\paragraph*{\textbf{The variable gadget.}} For a variable $x_i$, let $\cG(x_i)$ denote the {\it variable gadget} as illustrated in Figure~\ref{fig:variable-gadget-cubic}. The variable gadget consists of two copies of the fork gadget $F$ and one copy of the split gadget $S_4$ as  induced subgraphs. Here, we use $S_4$ since $x_i$ appears exactly four times in an instance of \RNAESAT. Then, following the same notation as in Figure~\ref{fig:variable-gadget-cubic}, the next lemma shows that any acyclic and non-transitive orientation of $\cG(x_i)$ that does not have a vertex of in-degree three or out-degree three has either the arcs $(u_k, x_{i,k})$ or the arcs $(x_{i,k}, u_k)$ for each $k \in \{1,2,3,4\}$. We refer to these orientations  as the \textit{true} configuration and the \textit{false} configuration of $\cG(x_i)$, respectively. For an illustration of these two configurations, see  Figure~\ref{fig:variable-gadget-cubic-orientations}.

\begin{figure}[t]
    \centering
    \includegraphics[width=.95\textwidth]{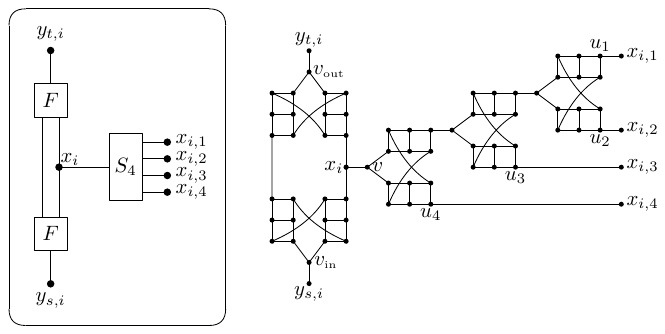}
    \caption{The variable gadget $\cG(x_i)$ and its schematic representation (left).}
    \label{fig:variable-gadget-cubic}
\end{figure}

\begin{lemma}\label{lem:variable}
Let $G=(V,E)$ be a connected undirected graph whose vertices have degree at most three, let $s,t\in V$, and let $\cD(G)$ be a non-transitive $st$-orientation of $G$ with arc set $A$. Further, let $V'$ be a subset of $V\setminus\{s,t\}$ such that the induced subgraph $G[V']$ is isomorphic to $\cG(x_i)$. Then the following statements hold:
\begin{enumerate}[\rm (i)]
\item there exists an acyclic and non-transitive orientation of $G[V']$ such that $y_{s,i}$ is the unique source and $y_{t,i}$, $x_{i,1},x_{i,2},x_{i,3}, x_{i,4}$ are precisely the sinks,
\item there exists an acyclic and non-transitive orientation of $G[V']$ such that $y_{s,i}$, $x_{i,1}$, $x_{i,2},x_{i,3}, x_{i,4}$ are precisely the sources and $y_{t,i}$ is the unique sink, 
\item$\{(x_i, v)\} \cup \{(u_k, x_{i,k}) \mid 1 \leq k \leq 4\} \subseteq A$ or $\{(v, x_i)\} \cup \{(x_{i,k}, u_k) \mid 1 \leq k \leq 4\} \subseteq A$,
\item if $(x_i,v) \in A$, then $x_{i,k}$ is the unique child of $u_k$ for $1\leq k \leq 4$, and
\item if $(v,x_i) \in A$, then $x_{i,k}$ is the unique parent of $u_k$ for  $1\leq k \leq 4$.
\end{enumerate}
\end{lemma}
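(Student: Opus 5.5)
Statements (iii)--(v) follow from Lemma~\ref{lem:split}, and statements (i) and (ii) need explicit orientations of $\cG(x_i)$. I use the notation of Figure~\ref{fig:variable-gadget-cubic}: the split gadget $S_4$ inside $\cG(x_i)$ has root edge $\{x_i,v\}$, with $v$ playing the role of $v_1$ and $x_i$ the role of $y$, and output edges $\{u_k,x_{i,k}\}$ for $1\le k\le 4$. The two copies of the fork gadget connect $y_{s,i}$ and $y_{t,i}$ to $x_i$.

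For (iii)--(v), the key check is that Lemma~\ref{lem:split} applies to the copy of $S_4$ in $G[V']$. Its hypotheses are that $G$ is connected with maximum degree three, that $\cD(G)$ is a non-transitive $st$-orientation, and that no degree-3 vertex of the split gadget is $s$ or $t$. The last one holds because $V'\subseteq V\setminus\{s,t\}$, and $S_4$ is an induced subgraph of $G$ because $G[V']$ is induced and isomorphic to $\cG(x_i)$. Lemma~\ref{lem:split} then gives either $\{(x_i,v)\}\cup\{(u_k,x_{i,k})\}\subseteq A$ or $\{(v,x_i)\}\cup\{(x_{i,k},u_k)\}\subseteq A$, which is (iii).
\begin{itemize}
\item If $(x_i,v)\in A$, part (I) of Lemma~\ref{lem:split} says each $u_k$ has in-degree two and out-degree one. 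Since $(u_k,x_{i,k})\in A$, the vertex $x_{i,k}$ is the unique child of $u_k$, which is (iv).
\item If $(v,x_i)\in A$, part (II) gives in-degree one and out-degree two for each $u_k$. Together with $(x_{i,k},u_k)\in A$, this makes $x_{i,k}$ the unique parent of $u_k$, which is (v).
\end{itemize}
It is worth confirming from the figure that the two fork copies attached at $x_i$ do not change the identity of the edge $\{x_i,v\}$ as the root edge of $S_4$. They only determine which side $x_i$ lies on.

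For (i) and (ii), I would exhibit the two configurations of Figure~\ref{fig:variable-gadget-cubic-orientations} explicitly.
\begin{itemize}
\item \textbf{True configuration (for (i)).} Orient each fork copy and the $S_4$ copy as in the right-hand picture of Figure~\ref{fig:fork-gadget-cubic}, chained so that the root edge of each fork points into it. This has the forced shape of Lemma~\ref{lem:fork}(I) and of the proof of Lemma~\ref{lem:split}. Flow then goes from $y_{s,i}$ through $x_i$ into $S_4$ and out to the $x_{i,k}$, with the other fork branch ending at $y_{t,i}$.
\item \textbf{False configuration (for (ii)).} Use the orientation of type~(II) on $S_4$, so that the $x_{i,k}$ become sources feeding into $x_i$ and on towards $y_{t,i}$.
\end{itemize}
In each case I would read off directly that $y_{s,i}$ is the unique source, and that the sinks are exactly $y_{t,i}$ and the $x_{i,k}$ in (i), respectively $y_{t,i}$ alone in (ii).

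Acyclicity is the easier check. I would give a topological order layer by layer: within each ladder, use the order forced by Lemma~\ref{lem:ladder}(i) or (ii), and then concatenate the ladders along the directed root edges.

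Non-transitivity is the main obstacle, because it needs a full check rather than a forced deduction. My approach is to note that every vertex has degree at most three. A transitive arc $(p,q)$ would therefore require $p$ to have a second out-neighbour and $q$ a second in-neighbour, joined by a directed path that avoids $(p,q)$. I would check this arc by arc, organised by underlying cycles as in Observation~\ref{obs:two_arcs}: every underlying cycle of the gadget must have at least two sign changes on each side, that is, no cycle splits into exactly two directed paths where one is a single arc. The cycles within one ladder are covered by Figure~\ref{fig:ladder-graph}. The cycles through two ladders of one fork are covered by the orientation drawn in Figure~\ref{fig:fork-gadget-cubic}. The remaining cycles run through the connections between forks, and these pass through cut vertices or edges of the chaining. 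Such cycles cannot exist across a cut edge, so I expect only a few extra cycles near $x_i$, which I would inspect by hand.
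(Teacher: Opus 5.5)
Your proposal follows the paper's proof: (iii)--(v) come from Lemma~\ref{lem:split} exactly as you argue, and (i)--(ii) are witnessed by the two configurations in Figure~\ref{fig:variable-gadget-cubic-orientations}, which the paper simply exhibits without the further acyclicity and non-transitivity checks you sketch. One correction to your description of the true configuration: the two fork copies on the left cannot both be of type~(I) in the sense of Lemma~\ref{lem:fork}. That part is oriented identically in both configurations, so $x_i$ needs one incoming and one outgoing fork edge. A count of terminal edges then forces one fork of each type: type~(I) has one entering and two leaving, type~(II) has two entering and one leaving, whereas $y_{s,i}$, $y_{t,i}$ and the two joined terminal pairs require three of each. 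The type~(II) fork is the one whose root edge points out to the sink $y_{t,i}$.
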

\begin{proof}
The two orientations of  $\cG(x_i)$ that are illustrated in Figure~\ref{fig:variable-gadget-cubic-orientations} establish (i) and (ii). More precisely, if $\cG(x_i)$ is in the {\it true} (resp. {\it false}) configuration, then (i) (resp. (ii)) holds.
Moreover,  (iii)--(v) follow from Lemma~\ref{lem:split}. 
\end{proof}

\begin{figure}[t]
    \centering
    \includegraphics[width=.85\textwidth]{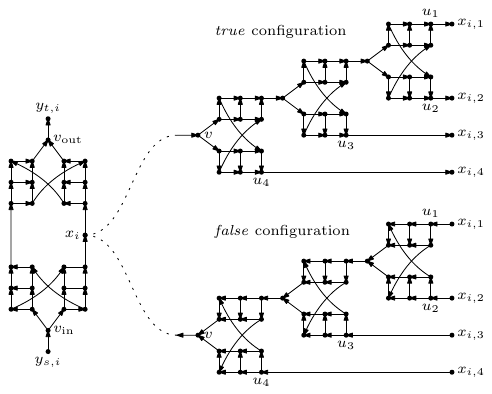}
    \caption{Two orientations of the variable gadget $\cG(x_i)$ corresponding to the truth assignment $\beta(x_i) = T$ (\textit{true} configuration) and  $\beta(x_i) = F$ (\textit{false} configuration). The part of $\cG(x_i)$ that is shown on the left and consists of two copies of the fork gadget has the same arc directions in both configurations.}
    \label{fig:variable-gadget-cubic-orientations}
\end{figure}

\paragraph*{\textbf{Vertices of degree two}} The following two lemmas allow us to connect several gadgets by identifying their respective degree-1 vertices and subsequently suppressing the resulting degree two vertices without compromising the existence of a non-transitive $st$-orientation in the resulting undirected graph.

\begin{lemma}\label{lem:subdivision}
Let $G = (V,E)$ be an undirected graph with $s,t\in V$, and  let $G'$ be a subdivision of $G$. If $G$ has a non-transitive $st$-orientation, then $G'$ has a non-transitive $st$-orientation. 
\end{lemma}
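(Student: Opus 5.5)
By the definition of a subdivision, it suffices to treat a single subdivision step and then induct on the length of the sequence $(G=G_1,G_2,\ldots,G_k=G')$. So let $G'$ be obtained from $G$ by subdividing one edge $\{u,w\}\in E$ with a new vertex $v$. Let $\cD(G)$ be a non-transitive $st$-orientation of $G$, and assume without loss of generality that $(u,w)$ is an arc of $\cD(G)$. Define $\cD(G')$ by keeping the direction of every arc of $\cD(G)$ other than $(u,w)$, and replacing $(u,w)$ with the two arcs $(u,v)$ and $(v,w)$.

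First, I check the source/sink structure. Every vertex of $V$ keeps its in-degree and out-degree, since $u$ trades the out-arc $(u,w)$ for $(u,v)$ and $w$ trades the in-arc $(u,w)$ for $(v,w)$. The new vertex $v$ has in-degree one and out-degree one. Hence $s$ and $t$ remain the unique source and sink. Note that $v\notin\{s,t\}$.

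Next, I use a path correspondence. Directed paths in $\cD(G')$ correspond bijectively to directed paths in $\cD(G)$ as follows: any occurrence of the consecutive pair $u,v,w$ is replaced by $u,w$. Paths that start or end at $v$ extend or shrink accordingly. Because $v$ has in- and out-degree one, any directed path through $v$ must use both $(u,v)$ and $(v,w)$, unless it begins or ends at $v$. Consequently, a directed cycle in $\cD(G')$ would contract to a directed cycle in $\cD(G)$. The contraction is still a cycle because $\{u,w\}$ is not doubled and $G$ is simple. So $\cD(G')$ is acyclic.

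The main case is showing that no arc is transitive, and I split it by arc type. Take an arc $(a,b)$ of $\cD(G')$ that is not incident with $v$, and suppose there is an alternative directed path from $a$ to $b$ in $\cD(G')$. Contracting it gives a directed path from $a$ to $b$ in $\cD(G)$ that avoids the arc $(a,b)$, since that arc is untouched by the contraction. This would make $(a,b)$ transitive in $\cD(G)$, a contradiction.

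For the arc $(u,v)$, any alternative path from $u$ to $v$ must enter $v$ through its unique in-arc $(u,v)$, which is impossible. Symmetrically, $(v,w)$ cannot be transitive, since any path from $v$ must leave through its unique out-arc $(v,w)$.

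The one subtle point is an alternative path from $u$ to $w$ in $\cD(G')$ not using $(u,v)$. Such a path would mean $(u,w)$ was transitive in $\cD(G)$. However, $u,w$ are no longer adjacent in $G'$, so this case does not even arise as an arc. The simple-graph assumption (no parallel edge $\{u,w\}$ reappearing) is the only thing to state carefully. This completes the single step, and the induction finishes the proof.
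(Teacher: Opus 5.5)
Your proof is correct and is essentially the paper's argument. The paper orients each subdivision path $P_e$ in the direction given by $\tau(e)$ for all edges at once and calls the verification straightforward, whereas you handle one subdivision step at a time by induction and actually carry out the source/sink, acyclicity and non-transitivity checks. Only the contraction direction of your path correspondence, for paths with both ends in $V$, is used, so the claimed bijection and the final paragraph about $(u,w)$ are unnecessary but harmless.
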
 
\begin{proof}
Suppose that $G$ has a non-transitive $st$-orientation $\cD_\tau(G)$. For each edge $e \in E$, let
\[
P_e = (v_1, v_2, \ldots, v_k) \text{ with } e = \{v_1, v_k\}
\]
be the path in $G'$ such that $v_i$ has degree two in $G'$ for each $i \in \{2,3,\ldots,k-1\}$. If $k = 2$, then $P_e$ is also an edge in $G'$ and, thus, $P_e$ consists of the two degree three vertices $v_1$ and $v_k$. We now obtain a non-transitive $st$-orientation $\cD_{\tau'}(G')$ as follows. If $\tau(\{v_1, v_k\}) = (v_1, v_k)$, we set $\tau'(\{v_i, v_{i+1}\}) = (v_i, v_{i+1})$ for $1 \leq i \leq k-1$ and, if $\tau(\{v_1, v_k\}) = (v_k, v_1)$, we set $\tau'(\{v_i, v_{i+1}\}) = (v_{i+1}, v_i)$ for $1 \leq i \leq k-1$. As  $\cD_{\tau}(G)$ is a non-transitive $st$-orientation of $G$, it is now straightforward to verify that $\cD_{\tau'}(G')$ is such an orientation of $G'$.
\end{proof}

The converse of Lemma~\ref{lem:subdivision} does not hold in general. However, the following lemma gives sufficient conditions for a subdivision such that the existence of a non-transitive $st$-orientation is preserved when suppressing degree-2 vertices. 
 
\begin{lemma}\label{lem:suppress-W}
Let $G = (V,E)$ be an undirected graph with $s,t\in V$ and such that each vertex has degree two or three. Suppose that $G$ has a non-transitive $st$-orientation. Furthermore, let $W$ be a subset of  $V\setminus\{s,t\}$ such that the following three properties are satisfied for each $w\in W$:
\begin{enumerate}[\rm (i)]
\item $w$ has degree two,
\item both neighbors $u$ and $v$ of $w$ have degree three in $G$, and 
\item every non-transitive $st$-orientation of $G$ has exactly one directed path connecting $u$ and $v$.
\end{enumerate} 
Then, the undirected graph obtained from $G$ by suppressing each vertex in $W$ is simple and has a non-transitive $st$-orientation.     
\end{lemma}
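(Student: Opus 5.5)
Fix a non-transitive $st$-orientation $\cD(G)$ and suppress the vertices of $W$ one at a time. For the suppressed graph $G^*$, take the orientation in which each path $u,w,v$ with $w\in W$ becomes the single arc $(u,v)$ when $\cD(G)$ contains $(u,w),(w,v)$, and $(v,u)$ in the symmetric case. This is well defined because $w\neq s,t$ has degree two, so it has exactly one incoming and one outgoing arc. All other arcs keep their direction.

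The argument has three steps, carried out in the following order.

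\emph{Simplicity.} Suppose $G^*$ is not simple. Then, for some $w\in W$ with neighbours $u,v$, the edge $\{u,v\}$ is already present, or arises from another suppressed vertex $w'\in W$ with the same two neighbours. In either case $u$ and $v$ are joined in $G$ by two internally disjoint paths: $(u,w,v)$, and either the edge $\{u,v\}$ or $(u,w',v)$. We apply (iii) to $\cD(G)$ itself, which is a non-transitive $st$-orientation of $G$. If both paths are directed the same way, we get two directed paths between $u$ and $v$, contradicting (iii). If they are directed oppositely, we get a directed cycle. Neither can happen, so $G^*$ is simple.

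\emph{Acyclicity and sources/sinks.} Directed paths in $\cD(G)$ and in the new orientation correspond bijectively: a directed path through $w$ uses $(u,w),(w,v)$ consecutively, and this pair is replaced by $(u,v)$. Hence the new orientation is acyclic. Every vertex other than $s$ and $t$ keeps in-degree and out-degree at least one. Thus the new orientation is an $st$-orientation of $G^*$.

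\emph{Non-transitivity (main obstacle).} Arcs of $G^*$ that already existed in $G$ remain non-transitive, since any alternative directed path in $G^*$ lifts to one in $\cD(G)$ avoiding that arc. The new arcs $(u,v)$ need more care. A second directed path from $u$ to $v$ in $G^*$ lifts to a directed path from $u$ to $v$ in $\cD(G)$ that avoids $w$, giving two distinct directed paths between $u$ and $v$. This is exactly what (iii) forbids. The essential point is that (iii) is applied to the particular orientation $\cD(G)$, which is legitimate because (iii) quantifies over every non-transitive $st$-orientation of $G$.

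When several vertices of $W$ are suppressed at once, the path correspondence is applied simultaneously to all of them, so the same arguments go through without induction.
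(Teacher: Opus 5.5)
Your proof is correct, but it is organized differently from the paper's. The paper argues by induction on $|W|$. It suppresses a single vertex $w$ and gets simplicity of the resulting graph $G'$ from Observation~\ref{obs:triangle-free} (no triangles). It then shows that the induced orientation of $G'$ is a non-transitive $st$-orientation, using (iii) for the new arc exactly as you do. Before it can invoke the induction hypothesis, it must re-verify hypothesis (iii) for $G'$ and $W\setminus\{w\}$. That step lifts an \emph{arbitrary} non-transitive $st$-orientation of $G'$ back to $G$ by re-subdividing $\{u,v\}$.

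You instead suppress all of $W$ at once and apply (iii) only to the one fixed orientation $\cD(G)$, so the re-verification disappears. Your simplicity argument uses (iii) and acyclicity in place of triangle-freeness. It also treats explicitly the case of two vertices of $W$ with the same pair of neighbours, which the paper handles only implicitly, through triangle-freeness of the intermediate graphs in later inductive steps.

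Your route gives a shorter argument and a slightly stronger lemma: it suffices that a single non-transitive $st$-orientation of $G$ has $(u,w,v)$ as the unique directed path between $u$ and $v$ for every $w\in W$. The price is that interactions between different suppressed vertices must be handled in one go. This works because (ii) guarantees that no two vertices of $W$ are adjacent. Hence each $w$ is replaced by an edge between two vertices of $V\setminus W$, and your case analysis for simplicity is exhaustive. You use this implicitly and should state it.

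Relatedly, your path correspondence is bijective only for directed paths with both endpoints outside $W$, not for all directed paths. That restricted version is all your acyclicity and non-transitivity arguments need.
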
 

\begin{proof}
Let $\cD_\tau(G)$ be a non-transitive $st$-orientation of $G$. By the statement of the lemma,  $\cD_\tau(G)$  exists. We use induction on $|W|$ to establish the lemma. If $|W|=0$, the lemma clearly holds. Assume that $|W|=k$ and that the lemma holds for every proper subset of $W$. Let $w\in W$, and let $u$ and $v$ be the two neighbors of $w$ in $G$. Furthermore, let $G'=(V',E')$ be the graph obtained from $G$ by suppressing $w$, i.e., $$V'=V\setminus \{w\}\text{ and }E'=(E\setminus\{\{u,w\},\{w,v\}\})\cup\{\{u,v\}\}.$$
By Observation~\ref{obs:triangle-free}, $G$ does not have an induced subgraph that is isomorphic to a cycle with exactly three vertices and, thus, $G'$ is simple.
Guided by $\tau$, we next obtain a mapping $\tau'$ for the edges in $G'$. Let $\tau'\colon E'\rightarrow A(E')$ be the mapping defined as follows:
\begin{enumerate}[(a)]
\item if $\tau(\{u,w\}) = (u,w)$, then set $\tau'(\{u,v\}) = (u,v)$  and, otherwise, set $\tau'(\{u,v\}) = (v,u)$ and
\item for each edge $e$ in $E'\setminus\{\{u,v\}\}$, set $\tau'(e) = \tau(e)$.

\end{enumerate}

As $s$ and $t$ is the unique source and sink, respectively, in $G$, note that $w$ has in-degree one and out-degree one in $\cD_{\tau}(G)$. Moreover, since $\cD_\tau(G)$ is a non-transitive $st$-orientation of $G$, it is straightforward to check that $\cD_{\tau'}(G')$ is acyclic and, for each $e\in E'\setminus\{\{u,v\}\}$, the arc $\tau'(e)$ is not a transitive arc in $\cD_{\tau'}(G')$. Now assume towards a contradiction that $\tau'(\{u,v\})$ is a transitive arc in $\cD_{\tau'}(G')$. If $(u,v)$ (resp. $(v,u)$) is an arc in $\cD_{\tau'}(G')$, then there exists a directed path $\pi$ from $u$ to $v$ (resp. from $v$ to $u$) in $\cD_{\tau'}(G')$ that consists of at least two arcs. Moreover $\pi$ is also a directed path in $\cD_{\tau}(G)$.
It now follows that $\cD_\tau(G)$ contains two directed paths connecting $u$ and $v$; a contradiction. Hence $\tau'(\{u,v\})$ is not a transitive arc in $\cD_{\tau'}(G')$ and, in turn,  $\cD_{\tau'}(G')$ is a non-transitive $st$-orientation of $G'$. We next show that $G'$ satisfies (i)--(iii) relative to $W\setminus \{w\}$. Let $w'\in W\setminus\{w\}$, and let $u'$ and $v'$ be the two neighbors of $w'$. By construction of $G'$ from $G$, $w'$ clearly satisfies (i) and (ii). To see that (iii) also holds, assume again towards a contradiction that there exists a non-transitive $st$-orientation $\cD_{\tau''}(G')$ such that there is a directed path that connects $u'$ and $v'$ without traversing $w'$. If $(u,v)$ (resp. $(v,u)$) is an arc in $\cD_{\tau''}(G')$, then we obtain a non-transitive $st$-orientation for $G$ from $\cD_{\tau''}(G')$ by replacing $(u,v)$ (resp. $(v,u)$) with the two arcs $(u,w)$ and $(w,v)$ (resp. $(v,w)$ and $(w,u)$).  As $w'\in W$, this orientation for $G$ has, in particular,  the property that there are two directed paths connecting $u'$ and $v'$; another contradiction. Hence, $G'$ satisfies (i)--(iii) relative to $W\setminus \{w\}$. It now follows from the induction assumption that the graph obtained from $G'$ by suppressing each vertex in $W\setminus \{w\}$ is simple and has a non-transitive $st$-orientation. This concludes the proof of the lemma.
\end{proof} 

 \paragraph*{\textbf{The main result.}} 
 We are now in a position to establish the main result of this section, which necessitates some additional terminology. Let $C=\{c_1,c_2,\ldots,c_m\}$ be a collection of clauses over a set of variables $X=\{x_1,x_2,\ldots,x_n\}$. As we will refer to specific appearances of variables in the proof of the next theorem, we define, 
for each combination of $1\leq i\leq n$ and $1\leq j\leq m$ such that $x_i\in c_j$, $$\alpha(x_i, c_j) = k \text{ with } k\in\{1,2,3,4\}$$ if exactly $k-1$ clauses in $\{c_1,c_2,\ldots,c_{j-1}\}$ contain $x_i$.
Intuitively, for a clause $c_j \in C$ and a variable $x_i \in c_j$, the value $\alpha(x_i, c_j) = k$ indicates that $c_j$ contains the $k$-th appearance of $x_i$ in $C$.

\begin{theorem}\label{thm:main-nt-st-orientation}
\NTstO\ is {\rm NP}-complete for cubic graphs. 
\end{theorem}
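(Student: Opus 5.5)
Membership in NP is immediate: a certificate is an orientation $\tau$ of $G$, and we can check in polynomial time that it is acyclic, that $s$ and $t$ are the only source and sink, and (via reachability after deleting each arc) that no arc is transitive. For NP-hardness I will give a polynomial-time Karp reduction from \RNAESAT. Let $(X,C)$ be an instance with $X=\{x_1,\dots,x_n\}$ and $C=\{c_1,\dots,c_m\}$.

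\textbf{Construction.} Build $G''$ from one source gadget $\cG(s)$, one sink gadget $\cG(t)$, a variable gadget $\cG(x_i)$ for each $x_i$, and a vertex $c_j$ for each clause. Identify the degree-1 vertex $y_{s,i}$ of $\cG(s)$ with the corresponding terminal of $\cG(x_i)$, and do the same for $y_{t,i}$. For each clause $c_j$ and each $x_i\in c_j$ with $k=\alpha(x_i,c_j)$, identify the degree-1 vertex $x_{i,k}$ of $\cG(x_i)$ with $c_j$. Every clause vertex then has degree three, and every identified terminal $y_{s,i},y_{t,i}$ has degree two with two degree-3 neighbours. Let $W$ be the set of these $2n$ degree-2 vertices, and let $G$ be obtained by suppressing $W$. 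Apart from $s$ and $t$, which I expect to have degree three inside their gadgets as drawn, $G$ is cubic; if not, I will fix the degree locally. The construction is clearly polynomial.

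\textbf{Correctness, forward direction.} Given $\beta$ that nae-satisfies $C$, orient $\cG(s)$ as in Lemma~\ref{lem:source}(i) and $\cG(t)$ as in Lemma~\ref{lem:sink}(i). Orient $\cG(x_i)$ in the true configuration (Lemma~\ref{lem:variable}(i)) if $\beta(x_i)=T$, and in the false configuration otherwise. A clause vertex then receives arcs in from its true literals and sends arcs out to its false literals. Since each clause has both, $c_j$ is neither a source nor a sink.

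\emph{Acyclicity.} A directed cycle would have to pass through clause vertices. However, every path goes source gadget $\to$ true variable gadget $\to$ clause $\to$ false variable gadget $\to$ sink gadget, so there is a layering and no cycle.

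\emph{No transitive arcs.} An arc inside a gadget is non-transitive within that gadget. A second path outside the gadget would have to leave and re-enter it through distinct terminals, and the layering forbids this. For example, a true gadget only has out-going terminals toward clauses plus $y_{t,i}$, so no path can return to it. An arc at a clause vertex $c_j$ joins $c_j$ to its unique neighbour $u_k$ in the variable gadget. By Lemma~\ref{lem:variable}(iv)/(v), $c_j$ is the unique child or unique parent of $u_k$, so any alternative $u_k$–$c_j$ path would have to start or end with that same arc. This makes a non-transitive $st$-orientation of $G''$. I then apply Lemma~\ref{lem:suppress-W} to pass to $G$. This requires condition (iii): in \emph{every} non-transitive $st$-orientation of $G''$, the neighbours of each $w\in W$ are joined by exactly one directed path. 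I would derive this from the forced arcs in Lemmas~\ref{lem:source}–\ref{lem:variable}: the neighbour of $y_{s,i}$ in $\cG(s)$ has out-degree one (Lemma~\ref{lem:source}(iii)), so its only out-path goes through $y_{s,i}$, and symmetrically for $y_{t,i}$.

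\textbf{Correctness, backward direction.} Given a non-transitive $st$-orientation of $G$, subdivide back to $G''$ using Lemma~\ref{lem:subdivision}. By Lemma~\ref{lem:variable}(iii), each $\cG(x_i)$ orients all four clause edges consistently, so setting $\beta(x_i)=T$ exactly when $(u_k,x_{i,k})\in A$ is well defined. Each $c_j\notin\{s,t\}$ has an in-arc and an out-arc, so each clause contains a true and a false literal. Hence $\beta$ nae-satisfies $C$.

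\textbf{Main obstacle.} The delicate step is the global non-transitivity and acyclicity check in the forward direction, together with verifying condition (iii) of Lemma~\ref{lem:suppress-W} for every orientation rather than just the constructed one. I would first try to do both using the layering argument and the forced degree patterns from Lemmas~\ref{lem:source}, \ref{lem:sink} and \ref{lem:variable}.
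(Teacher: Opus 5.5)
Your proposal is correct and follows the paper's proof. It uses the same reduction from \RNAESAT\ with the same gadgets, Lemma~\ref{lem:suppress-W} with condition (iii) supplied by Lemmas~\ref{lem:source}(iii) and~\ref{lem:sink}(iii), and Lemmas~\ref{lem:subdivision} and~\ref{lem:variable}(iii) for the converse. Identifying $x_{i,k}$ directly with $c_j$ only yields the same cubic graph with a smaller $W$, and your layering is a clean way to do the global acyclicity and non-transitivity check.

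One small fix: citing Lemma~\ref{lem:variable}(iv)/(v) in the forward direction is circular, since that lemma presupposes a non-transitive $st$-orientation. Nothing is lost, though. Your layering already forces any directed path between $u_k$ and $c_j$ to stay inside $\cG(x_i)$ apart from its arc at $c_j$, and $\{u_k,c_j\}$ is the only edge joining $c_j$ to $\cG(x_i)$.
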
 

\begin{proof}
We start by showing that \NTstO\ is in NP. Let $\cD(G)$ be an orientation of an undirected graph $G=(V,E)$ with $s,t\in V$. By considering, in turn, the in-degree and out-degree of each vertex in $\cD(G)$, we can verify in polynomial time that $s$ is the unique source and $t$ is the unique sink of $\cD(G)$. Furthermore, verifying that $\cD(G)$is acyclic can also be done in polynomial time by using depth-first search.
Lastly, for each arc $(u,v)$ of $\cD(G)$, we can again use depth-first search to verify that there is no directed path from $u$ to $v$ in the directed graph obtained from $\cD(G)$ by deleting $(u,v)$. Hence,  we can check in time that is polynomial in $|V|$ if $\cD(G)$ is a non-transitive $st$-orientation of $G$.

We complete the proof of the theorem by establishing NP-hardness of \NTstO\ for cubic graphs by a reduction from \RNAESAT. Let $\cI = (X, C)$ be an instance of \RNAESAT\ with variables $X = \{x_1, x_2, \ldots, x_n\}$ and clauses $C = \{c_1, c_2, \ldots, c_m\}$. Throughout the remainder of the proof, we use $i$, $j$, and $k$ to index variables, clauses, and appearances of variables, respectively. More specifically $1\leq i\leq n$, $1\leq j\leq m$, and $1\leq k\leq 4$.

Let $\cG(s)$ be the source gadget, and let $\cG(t)$ be the sink gadget such that each of  $\cG(s)$ and $\cG(t)$ consists of the two split gadgets $S_3$ and $S_n$. In particular, the vertices $y_{s,i}$ of $\cG(s)$ (resp.\ $y_{t,i}$ of $\cG(t)$) are precisely the degree-1 vertices of $\cG(s)$ (resp.\ $\cG(t)$). For each variable $x_i \in X$, we apply the following  three steps one after the other to obtain a connected undirected graph $G'$ from $\cG(s)$ and $\cG(t)$.
\begin{enumerate}[(a)]
\item Introduce a new copy of the variable gadget~$\cG(x_i)$, where $y_{s,i}$, $y_{t,i}$, $x_{i,1}$, $x_{i,2}$, $x_{i,3}$ and $x_{i,4}$ are precisely the degree-1 vertices of $\cG(x_i)$.   
\item Identify $y_{s,i}$ of $\cG(s)$ with $y_{s,i}$ of~$\cG(x_i)$.
\item Identify $y_{t,i}$ of ~$\cG(t)$ with $y_{t,i}$ of~$\cG(x_i)$.
\end{enumerate}   
Next we obtain a connected undirected graph $G''$ from $G'$ as follows. For each clause $c_j = \{x_p, x_q, x_r\} $ with $p,q,r\in\{1,2,\ldots,n\}$, we introduce a new vertex $c_j$, referred to as a {\it clause gadget}, and three edges $\{x_{p,\kappa}, c_j\}$, $\{x_{q,\kappa'}, c_j\}$ and $\{x_{r,\kappa''}, c_j\}$, where $\kappa = \alpha(x_p, c_j)$, $\kappa' = \alpha(x_q, c_j)$ and $\kappa'' = \alpha(x_r, c_j)$. 
Observe that each degree-1 vertex $x_{i,k}$ of $G'$  is a degree-2 vertex in $G''$ that is adjacent to one vertex of $\cG(x_i)$ and one vertex $c_j$. In other words, each edge incident with $c_j$ corresponds to the first, second, third, or fourth appearance of a variable in $C$. Now, by construction, each vertex of $G''$ has degree at most three with \begin{equation}\label{eq:W}
W = \bigcup_{i=1}^n \{y_{s,i},\, y_{t,i},\, x_{i,1},\, x_{i,2},\, x_{i,3},\, x_{i,4}\}
\end{equation}
being the set of degree-2 vertices of $G''$.

\begin{sublemma}
$C$ is nae-satisfiable if and only if $G''$ has a non-transitive $st$-orientation. 
\end{sublemma}

\begin{proof}
First, suppose that $C$ is nae-satisfiable. Let $\beta\colon X \rightarrow \{T,F\}$ be a truth assignment for $X$ that nae-satisfies $C$. 
Let $V_s$ and $V_t$ be the vertex set of $\cG(s)$ and $\cG(t)$, respectively.
By construction, the induced subgraph $G''[V_s]$ is isomorphic to the source gadget, each vertex in $S = \{y_{s,i} \mid 1 \leq i \leq n\}$ has degree one, and all other vertices have degree three in $G''[V_s]$. 
By Lemma~\ref{lem:source}(i), there is an acyclic and non-transitive orientation $\cD(G''[V_s])$ such that $s$ is the unique vertex with in-degree zero and $S$ is the subset of vertices in $V_s$ 
with out-degree zero. Similarly, the induced subgraph $G''[V_t]$ is isomorphic to the sink gadget, each vertex in $T = \{y_{t,i} \mid 1 \leq i \leq n\}$ has degree one, and all other vertices have degree 
three in $G''[V_t]$. By Lemma~\ref{lem:sink}(i), there is an acyclic and non-transitive orientation $\cD(G''[V_t])$ such that $t$ is the unique vertex with out-degree zero and $T$ is the subset of vertices in $V_t$ with in-degree zero. Turning to the variable gadgets, let $V_{x_i}$ be the vertex set of $\cG(x_i)$. Then the induced subgraph $G''[V_{x_i}]$ is isomorphic to the variable gadget $\cG(x_i)$, each vertex in 
$
\{y_{s,i},\, y_{t,i},\, x_{i,1},\, x_{i,2},\, x_{i,3},\, x_{i,4}\}
$ 
has degree one, and all other vertices have degree three. If $\beta(x_i) = T$, let $\cD(G''[V_{x_i}])$ be the \textit{true} orientation of the variable gadget and, if $\beta(x_i)=F$, let $\cD(G''[V_{x_i}])$ be the \textit{false} orientation \ (see Figure~\ref{fig:variable-gadget-cubic-orientations}). By Lemma~\ref{lem:variable}(i) or (ii), $\cD(G''[V_{x_i}])$ is acyclic and non-transitive, $y_{s,i}$ has in-degree zero and out-degree one, and $y_{t,i}$ has in-degree one and out-degree zero. Moreover, if $\beta(x_i) = T$, each vertex in $\{x_{i,1},\, x_{i,2},\, x_{i,3},\, x_{i,4}\}$ has in-degree one and out-degree zero and, if $\beta(x_i) = F$, these vertices have in-degree zero and out-degree one. Lastly, for each edge $e=\{x_{i,k}, c_{j}\}$ that is incident with a clause gadget $c_j$ in $G''$,
orient $e$ as $(x_{i,k}, c_{j})$ if $\beta(x_i) = T$ and orient $e$ as $(c_{j}, x_{i,k})$ if $\beta(x_i) = F$. Now observe that $\cD(G''[V_s])$, $\cD(G''[V_t])$, $\cD(G''[V_{x_i}])$, and the orientations assigned to the edges incident with a clause gadget collectively assign an orientation to each edge of $G''$. Let $\cD(G'')$ denote this orientation.

We next show that $\cD(G'')$ is a non-transitive $st$-orientation. Clearly, $s$ and $t$ is a  source and sink, respectively, of $\cD(G'')$. Furthermore, by construction and since $\beta$ nae-satisfies each clause $c_j$, each degree-3 vertex except for $s$ and $t$ has in-degree at least one and out-degree at least one. Recall that the set of degree-2 vertices in $G''$ is $W$ (see Equation~\ref{eq:W}) and that each such vertex is incident to two edges that belong to two distinct gadgets.
Now, $y_{s,i}$ has in-degree one due to $\cD(G''[V_s])$ and out-degree one due to $\cD(G''[V_{x_i}])$. Similarly, $y_{t,i}$ has in-degree one and out-degree one due to the orientations chosen for $\cD(G''[V_{x_i}])$ and $\cD(G''[V_t])$. Lastly, the two arcs that are incident with a vertex $x_{i,k}$ form a directed path to a clause gadget vertex $c_j$  if $\beta(x_i) = T$ and a directed path in the reverse direction if $\beta(x_i) = F$. Hence, $s$ and $t$ is the unique source and sink, respectively, of $\cD(G'')$. 
Moreover, since a vertex with in-degree one and out-degree one cannot be incident to a transitive arc, it also follows from the orientations assigned to the edges in $G''$ that $\cD(G'')$ is non-transitive. It remains to show that $\cD(G'')$ is acyclic. Assume towards a contradiction that $\cD(G'')$ contains a directed cycle. By construction, $\cD(G''[V_s])$ does not contain a directed cycle, and the same holds (individually) for each of  $\cD(G''[V_t])$ and $\cD(G''[V_{x_i}])$. Furthermore, the degree-2 vertex $y_{s,i}$  does not lie on a directed cycle because there is no arc from a vertex not in $V_s$ to a vertex in $V_s$. Similarly, the degree-2 vertex $y_{t,i}$  does not lie on a directed cycle because there is no arc from a vertex  in $V_t$ to a vertex not in $V_t$. Hence, there exists an arc $(x_{i,k},c_j)$ or $(c_j,x_{i,k})$ in $\cD(G'')$ that lies on a directed cycle $C''$. Recall that $x_{i,k}$ is a vertex with in-degree one and out-degree one.  If $\beta(x_i) = F$, then it follows from the the \textit{false} orientation $\cD(G''[V_{x_i}])$ that 
$y_{t,i}$ lies on $C''$; a contradiction. Thus, $\beta(x_i)=T$. Then, each directed path starting at $x_{i,k}$ contains $(x_{i,k},c_j)$ and $(c_j,x_{i',k'})$ as its first two arcs for some $i'\in\{1,2,\ldots,n\}$ with  $\beta(x_{i'})=F$ and $k'\in\{1,2,3,4\}$. However, since $\beta(x_{i'})=F$, we again have that $y_{t,i'}$ lies on $C''$; another contradiction. It now follows that $\cD(G'')$ is acyclic and, therefore, a non-transitive $st$-orientation.

Second, suppose that $G''$ has a non-transitive $st$-orientation $\cD(G'')$. We obtain a truth assignment $\beta \colon X \rightarrow \{T,F\}$ by setting $\beta(x_i) = T$ if $(x_{i,1}, c_j)$ is an arc in $\cD(G'')$ and by setting $\beta(x_i) = F$ if $(c_j,x_{i,1})$ is an arc in $\cD(G'')$. Assume towards a contradiction that $C$ has a clause $c_{j} = \{x_p, x_q, x_r\}$ that is not nae-satisfied by $\beta$. We consider the case that $\beta(x_p) = \beta(x_q) = \beta(x_r) = T$ with $p,q,r\in\{1,2,\ldots,n\}$. The other case for which $\beta(x_p) = \beta(x_q) = \beta(x_r) = F$ leads to a contradiction by analogous arguments and is omitted. 
By construction of $\beta$, the digraph $\cD(G'')$ contains the three arcs $(x_{p,1}, c_{j_p})$, $(x_{q,1}, c_{j_q})$ and $(x_{r,1}, c_{j_r})$, where, $j_p$, $j_q$ and $j_r$ is the smallest index of a clause in $C$ that contains $x_p$, $x_q$, and $x_r$, respectively. Consider the arc $(x_{p,1}, c_{j_p})$.  Since $x_{p,1}\notin\{s,t\}$, it follows that $x_{p,1}$ has in-degree one and out-degree one. Thus $(u_1, x_{p,1})$ is an arc in $\cD(G'')$, where $u_1$ is a vertex of the variable gadget $\cG(x_p)$. Let $k = \alpha(x_p,c_j)$. 
By Lemma~\ref{lem:variable}(iii) and since $x_{p,k}\notin\{s,t\}$, it follows that $\cD(G'')$ contains the arcs $(u_k,x_{p,k})$ and $(x_{p,k}, c_j)$. Using the same reasoning, $\cD(G'')$ also contains the arcs $(x_{q,k'}, c_j)$ and $(x_{r,k''}, c_j)$, where $k' = \alpha(x_q,c_j)$ and $k''= \alpha(x_r,c_j)$. Thus $c_j$ is a sink in $\cD(G'')$. As $c_j\ne t$, this gives a contradiction. Hence, $\beta$ nae-satisfies $C$.
\end{proof}

\begin{figure}[t]
    \centering
    \includegraphics[width=\textwidth]{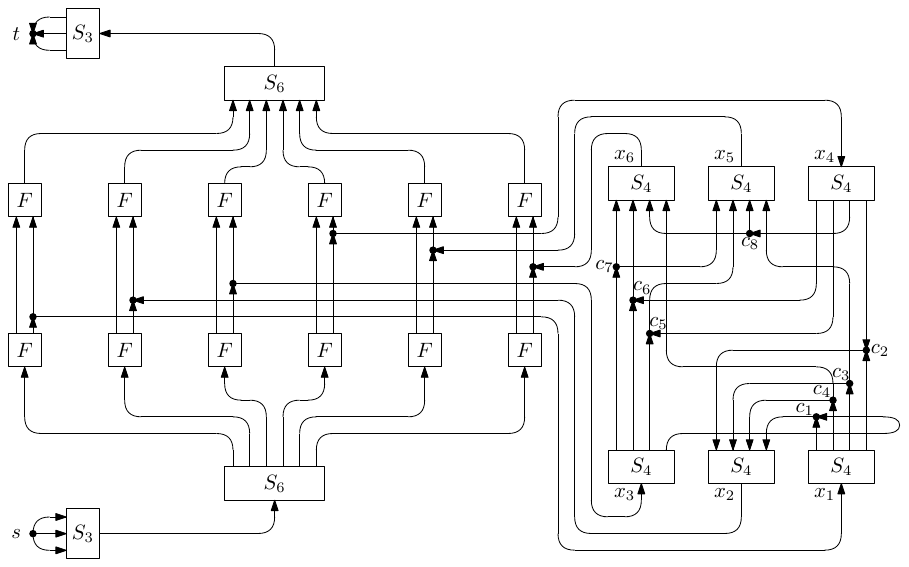}
    \caption{A non-transitive $st$-orientation $\cD(G^*)$ of the cubic graph $G^*$  as in the proof of Theorem~\ref{thm:main-nt-st-orientation} for the example $C_{\text{yes}}$ with clauses $\{x_1, x_2, x_3\}$, $\{x_1, x_2, x_4\}$, $\{x_1, x_2, x_5\}$, $\{x_1, x_2, x_6\}$, $\{x_3, x_4, x_5\}$, $\{x_3, x_4, x_6\}$, $\{x_3, x_5, x_6\}$ and $\{x_4, x_5, x_6\}$. Observe that $\cD(G^*)$ corresponds to the truth assignment $\beta$ with $\beta(x_1) = \beta(x_3) = \beta(x_4) = T$ and $\beta(x_2) = \beta(x_5) = \beta(x_6) = F$.    
    }
    \label{fig:construction-example}
\end{figure} 

Now obtain a graph $G^*$ from $G''$ by suppressing each vertex in $W$. Since $W$ contains every degree-2 vertex of $G''$, the graph $G^*$ is cubic. Furthermore, as $s$ and $t$ are degree-3 vertices in $G$, they are also vertices of $G^*$. Figure~\ref{fig:construction-example} illustrates $G^*$ for the formula $C_{\text{yes}}$ given at the beginning of Section~\ref{sec:nt-st-orientation-cubic}.

\begin{sublemma}
$G''$ has a non-transitive $st$-orientation if and only if $G^*$ has such an orientation.
\end{sublemma}

\begin{proof}
Suppose that $G''$ has a non-transitive $st$-orientation. By construction of $G''$ it is straightforward to check that each vertex in $W$ has two degree-3 neighbors.
Now, let $\cD(G'')$ be a non-transitive $st$-orientation of $G''$, and let $w\in W$. Recalling that $w$ has in-degree one and out-degree one in $\cD(G'')$, let $u$  be the parent of $w$, and let $v$ be the child of $w$ in $\cD(G'')$. We distinguish three cases.  First, if $w =y_{s,i}$, then $u$ is the vertex $s_i$ of $\cG(s)$ and has out-degree one by Lemma~\ref{lem:source}(ii) and (iii). Thus, $w$ is the unique child of $u$. Second, if $w = y_{t,i}$, then $v$ is the vertex $t_i$ of $G(t)$ and has in-degree one by Lemma~\ref{lem:sink}(ii) and (iii). Thus, $w$ is the unique parent of~$v$.
Third, if $w = x_{i,k}$, then $u$ or $v$ is the vertex $u_k$ of $\cG(x_i)$. Moreover, by Lemma~\ref{lem:variable}(iii)--(v), either $w$ is the unique child of $u$ if $u=u_k$, or $w$ is the unique parent of $v$ if $v=u_k$. It now follows that, for all three cases, there is only one directed path connecting $u$ and $v$ in $\cD(G'')$. Hence, by Lemma~\ref{lem:suppress-W}, $G^*$ has a non-transitive $st$-orientation. 
Next suppose that $G^*$ has a non-transitive  $st$-orientation As $G''$ is a subdivision of $G^*$, Lemma~\ref{lem:subdivision} implies that $G''$ has a non-transitive $st$-orientation as well.
\end{proof}

The theorem now follows since the number of vertices in $G''$ is polynomial in $n$ and, so, the construction of $G''$ from $\cI=(X,C)$ and, subsequently, of $G^*$ from $G''$ takes polynomial time. This completes the proof of the theorem.
\end{proof}

Recall the construction of $G''$ in the proof of Theorem~\ref{thm:main-nt-st-orientation}. If we omit the $S_3$ part of the source and the sink gadgets, i.e., we use the split  gadget $S_n$ with $y=s$ as source gadget and the split gadget $S_n$ with $y=t$ as the sink gadget, the following corollary, which will become useful in the next section, follows from Theorem~\ref{thm:main-nt-st-orientation}.

\begin{corollary}\label{cor:st-result-for-phylo}
\NTstO\ is {\em NP}-complete for undirected graphs $G=(V,E)$ with $s,t\in V$ in which $s$ and $t$ have degree one and all other vertices have degree three.
\end{corollary}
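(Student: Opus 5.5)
Membership in NP follows exactly as in the proof of Theorem~\ref{thm:main-nt-st-orientation}. For hardness, I will rerun that reduction from \RNAESAT\ with modified source and sink gadgets. Replace $\cG(s)$ by a copy of the split gadget $S_n$ whose degree-1 vertex $y$ is the new vertex $s$. Similarly, replace $\cG(t)$ by a copy of $S_n$ whose degree-1 vertex $y$ is $t$. The degree-1 outputs $y_1,\dots,y_n$ of the source copy are identified with the vertices $y_{s,i}$ of the variable gadgets $\cG(x_i)$, and those of the sink copy with the vertices $y_{t,i}$. Clause vertices $c_j$ are added as before, giving $G''$. Then suppress the set $W$ of degree-2 vertices from equation~(\ref{eq:W}) to obtain $G^*$. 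In $G^*$, the vertices $s$ and $t$ have degree one and every other vertex has degree three.

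I first need the analogues of Lemmas~\ref{lem:source} and~\ref{lem:sink}. Because $s$ is a degree-1 source, its unique edge $\{s,v_1\}$ must be oriented as $(s,v_1)$. No degree-3 vertex of the split gadget is $s$ or $t$, so Lemma~\ref{lem:split}(I) applies. It gives $(u_i,y_{s,i})\in A$ for all $i$, and each $u_i$ then has in-degree two and out-degree one, so $y_{s,i}$ is the unique child of $u_i$. This replaces Lemma~\ref{lem:source}(ii)--(iii). Observation~\ref{obs:symmetry} together with Lemma~\ref{lem:split}(II) gives the sink version in the same way. For part (i), I need an explicit acyclic non-transitive orientation of $S_n$ with source $y$ and sinks $y_1,\dots,y_n$. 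I will obtain it from the $S_n$ part of Figure~\ref{fig:sosi-gadget-cubic-orientation}, whose input edge is oriented into $S_n$. Restricting to $S_n$ and regarding that input edge as $(s,v_1)$ yields the required orientation.

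The proof then follows the two sublemmas of Theorem~\ref{thm:main-nt-st-orientation}. For the forward direction, combine the gadget orientations with the truth assignment. Acyclicity is argued exactly as before: no arc enters the source part, no arc leaves the sink part, and any cycle through a clause would pass through some $y_{t,i}$. For the backward direction, the clause argument uses only Lemma~\ref{lem:variable}(iii), so it is unchanged. For the suppression step, Lemma~\ref{lem:suppress-W} requires every $w\in W$ to have two degree-3 neighbours. This still holds, because $s$ and $t$ are adjacent only to $v_1$ of their split gadgets and are not neighbours of any vertex in $W$. The uniqueness-of-path condition follows from the unique-child and unique-parent facts above together with Lemma~\ref{lem:variable}(iv)--(v). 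Lemma~\ref{lem:subdivision} gives the converse direction.

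The step I expect to be the main obstacle is the hypotheses of the earlier lemmas, because $s$ and $t$ now lie inside the gadgets' boundary. Lemma~\ref{lem:split} only requires that no degree-3 vertex of the gadget be $s$ or $t$, and here $s=y$ has degree one, so this is fine. I still need to recheck Lemma~\ref{lem:fork} for the first fork of each split gadget, where the vertex $x$ equals $s$. Its hypothesis excludes only $V_0\cup V_1\cup\{v\}$ from $\{s,t\}$, so $x=s$ is permitted and the lemma applies. Finally, I will confirm that $G^*$ remains simple, which Observation~\ref{obs:triangle-free} guarantees.
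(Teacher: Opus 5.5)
Your proposal is correct and follows the paper's own route: the paper obtains the corollary by dropping the $S_3$ part of $\cG(s)$ and $\cG(t)$, using $S_n$ with $y=s$ (resp.\ $y=t$) as the source (resp.\ sink) gadget, and rerunning the proof of Theorem~\ref{thm:main-nt-st-orientation}; your checks that Lemma~\ref{lem:fork} permits $x=s$ and that Lemma~\ref{lem:split}(I)/(II) can replace Lemmas~\ref{lem:source} and~\ref{lem:sink} make explicit what the paper leaves unsaid. Two small formalities remain, which the paper also leaves implicit: Lemma~\ref{lem:suppress-W} is stated for graphs in which every vertex has degree two or three, so you should note that its proof never uses this for the degree-one vertices $s,t$; and the simplicity of $G^*$ on no-instances has to be read off the construction (no vertex of $W$ lies on a triangle and no two vertices of $W$ have the same two neighbours) rather than from Observation~\ref{obs:triangle-free}, which applies only when an orientation exists.
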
 
 
\paragraph{\bf \boldmath NP-completeness if $s$ and $t$ are not part of the input.} We conclude this section by remarking that \NTstO\ on cubic graphs remains NP-complete if $s$ and $t$ are not part of the input. More formally, we consider the following decision problem.

\noindent\fbox{%
    \parbox{.97\textwidth}{%
\begin{prb}
\noindent {\sc Relaxed-}\NTstO\\
{\bf Instance.} An undirected graph $G=(V,E)$.\\
{\bf Question.} Does $G$ have a non-transitive $st$-orientation for some choice of vertices $s,t\in V$? \end{prb}
}
}

The source gadget $\cG(s)$ and the sink gadget $\cG(t)$ both contain a cut edge that is not incident with a leaf, i.e., the edge $e$ depicted in Figure~\ref{fig:sosi-gadget-cubic}. These two cut edges, say $e_s$ in $\cG(s)$ and $e_t$ in $\cG(t)$, are also cut edges of the graph $G^*$ that is constructed in the proof of Theorem~\ref{thm:main-nt-st-orientation}. Let $B_s$ (resp.\ $B_t$) denote the biconnected component of $G^*$ that contains $s$ (resp.\ $t$) 
Now, consider a non-transitive $s't'$-orientation $\cD(G^*)$ such that $s'$ and $t'$ are some vertices of $G^*$. Then, one of $B_s$ and $B_t$ contains $s'$ and the other contains $t'$. By Observation~\ref{obs:symmetry}, we may assume that $B_s$ contains $s'$ and $B_t$ contains $t'$. Then, $e_s$ and $e_t$ are oriented in the same way as the cut edge $e$ in the proof of Lemmas~\ref{lem:source} and~\ref{lem:sink}, respectively, which is sufficient for Statements (ii) and (iii) of these two lemmas to still hold. Moreover, as $s'$ and $t'$ are vertices of $\cG(s)$ and $\cG(t)$, respectively, Lemma~\ref{lem:variable} for the variable gadget applies without changes. It now follows that the construction of $G^*$ as presented in the proof of Theorem~\ref{thm:main-nt-st-orientation} also establishes NP-completeness of {\sc Relaxed-}\NTstO\ for cubic graphs. 

\section{Non-transitive orientations of unrooted phylogenetic networks}\label{sec:phylogenetic-networks}

In this section, we show by a reduction from \NTstO\ that deciding the existence of a non-transitive orientation of an unrooted phylogenetic network is NP-complete. We start by defining unrooted (resp.\ rooted) phylogenetic networks and what a (non-transitive) orientation means in this setting.

Let $L$ be a non-empty finite set. An \textit{unrooted binary phylogenetic network} $U$ on $L$ is a leaf-labeled simple connected graph such that each vertex has degree three or degree one and the degree-1 vertices are bijectively labeled with the elements in $L$.  Furthermore, a \textit{rooted binary phylogenetic network} $N$ on $L$ is a leaf-labeled simple acyclic digraph such that there is a unique vertex $\rho$, the \textit{root} of $N$, with in-degree zero and out-degree two,  the vertices with in-degree one and out-degree zero are bijectively labeled with the elements in $L$, and each other vertex has either in-degree one and out-degree two, or in-degree two and out-degree one. If $|L|=1$, then we allow an unrooted as well as a rooted binary phylogenetic network on $L$ to consist of the single vertex in $L$. From now on we omit the term ``binary'' as all unrooted and rooted phylogenetic networks that we consider in this section are binary.

 A rooted phylogenetic network $N$ on $L$ is called an \textit{orientation} of an unrooted phylogenetic network $U$ on $L$ if $|L|=1$ and $U$ and $N$ consist of the single vertex in $L$, or if $N$ can be obtained from $U$ by subdividing an edge of $U$ with the root vertex $\rho$ and then assigning a direction to each edge. As for (unlabeled) directed graphs an arc $(u,v)$ of a rooted phylogenetic network is {\it transitive} if there exists a directed path from $u$ to $v$ that does not traverse $(u,v)$. A rooted phylogenetic network $N$ is \textit{non-transitive} if $N$ does not contain a transitive arc. We remark that in the context of rooted phylogenetic networks, a transitive arc is commonly referred to as a \textit{shortcut}. However, we keep using the term transitive arc throughout this section.  In the remainder of this section, we will investigate the following decision problem.
 
\noindent\fbox{%
    \parbox{.97\textwidth}{%
\begin{prb}
\noindent\PhyloNTO\\
{\bf Instance.} An unrooted phylogenetic network $U$ on $L$.\\
{\bf Question.} Does there exist a rooted phylogenetic network on $L$ that is a non-transitive orientation of $U$? \end{prb}
}
}

To establish NP-completeness of \PhyloNTO, we construct an unrooted phylogenetic network $U$ with exactly one leaf $\ell$ and at least one edge. 
Then a non-transitive orientation of $U$ as a rooted phylogenetic network with root $\rho$ is equivalent to a non-transitive $st$-orientation of $U$ with $s=\rho$ and $t=\ell$.

For the reduction from \NTstO\ to \PhyloNTO, we introduce a gadget that is obtained by a slight modification from the fork gadget. Let the \textit{root gadget} $\cG(\rho)$ be the undirected graph depicted in Figure~\ref{fig:phylo-gadgets}. The notation that is used in the next lemma follows that of the same figure.

\begin{figure}
    \centering
    \includegraphics[width=.8\textwidth]{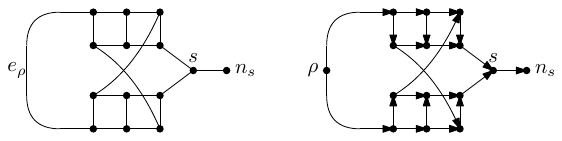}
    \caption{The root gadget (left), and an acyclic and non-transitive orientation of the root gadget (right).}
    \label{fig:phylo-gadgets}
\end{figure}
\begin{lemma}\label{lem:root-gadget}
Let $U$ be an unrooted phylogenetic network on $L$, and let $N$ be a rooted phylogenetic network that is a non-transitive orientation of $U$. Further, let $V$ with $V \cap L = \emptyset$ be a subset of the vertices of $U$ such that the undirected graph $U[V]$ is isomorphic to $\cG(\rho)$. Then $N$ satisfies the following properties:
\begin{enumerate}[\rm (i)]
\item $N$ is obtained from $U$ by subdividing an edge in $U[V]$ and 
\item $(s,n_s)$ is an arc in~$N$.
\end{enumerate}
\end{lemma}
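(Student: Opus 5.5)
The plan is to reuse the local case analysis of Lemmas~\ref{lem:ladder} and~\ref{lem:fork}. First I would observe that those proofs never use that $s$ and $t$ are the \emph{unique} source and sink. They only use two facts:
\begin{enumerate}[(a)]
\item every vertex of the gadget under consideration has in-degree and out-degree at least one;
\item Observation~\ref{obs:two_arcs} holds, i.e.\ no underlying cycle may have exactly one arc against a directed cycle. This holds in every acyclic non-transitive digraph.
\end{enumerate}
In $N$, the vertices of $U[V]$ are neither leaves (since $V\cap L=\emptyset$) nor the root, unless the root subdivides an edge of $U[V]$. So every vertex of $U[V]$ other than a possible root has in-degree and out-degree at least one. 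Hence Lemmas~\ref{lem:ladder} and~\ref{lem:fork} apply verbatim to the ladder and fork parts of $\cG(\rho)$, provided the root is not placed inside them.

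For (i), I would argue by contradiction. Assume $\rho$ subdivides an edge of $U$ that is not in $U[V]$. Then every vertex of $V$ has in-degree and out-degree at least one in $N$. By Lemma~\ref{lem:ladder}, each of the two ladders of $\cG(\rho)$ is oriented as in one of (i)--(iv). By the argument of Lemma~\ref{lem:fork}, the fork part then has one of its two forced configurations. Here I expect the modification that distinguishes $\cG(\rho)$ from the fork gadget to bite. The extra structure around $s$ and $n_s$ (the edges replacing $x$, $x_0$, $x_1$, which no longer leave the gadget freely) should close an underlying cycle through $v$ and the $a_i$. I would check that this cycle, in both forced configurations (I) and (II), is either a directed cycle or has exactly one reversed arc. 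Either outcome contradicts acyclicity or Observation~\ref{obs:two_arcs}. Alternatively, the forced degrees (I)/(II) may leave some gadget vertex with in-degree or out-degree zero. Concretely, I would first try the configuration in which $v$ has out-degree two and both $a_i$ have in-degree two, and follow arcs to $s$ and $n_s$. The other configuration then follows by Observation~\ref{obs:symmetry}-style reversal, although reversal swaps the root and the leaves, so that symmetry step needs care. This shows the root lies on an edge of $U[V]$.

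For (ii), I would take the edge $\{p,q\}$ of $U[V]$ subdivided by $\rho$. Now $p$ and $q$ are children of $\rho$, and all other gadget vertices still have in-degree and out-degree at least one. I would go through the edges of $U[V]$ up to the symmetries of $\cG(\rho)$.
\begin{itemize}
\item If $\{p,q\}$ lies inside a ladder, the ladder argument is perturbed only locally. I would show directly, as in the proof of Lemma~\ref{lem:fork}, that orienting the remaining ladder and cross edges $\{b_0,e_1\}$, $\{b_1,e_0\}$ yields a transitive arc or a directed cycle.
\item If $\{p,q\}$ is an edge at $s$, then $s$ is a child of $\rho$. From the forced orientation of the rest of the gadget, $s$ must direct its remaining edge $\{s,n_s\}$ outward, giving $(s,n_s)\in N$.
\end{itemize}
The main obstacle is this exhaustive but finite case check. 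I would organise it by propagating in/out-degree constraints from $\rho$ outward and invoking Observation~\ref{obs:two_arcs} on the 4-cycles of the ladders and on the cycles through the cross edges, exactly as in Figures~\ref{fig:fork-cases-proof} and~\ref{fig:for-cases-1-with-2}.
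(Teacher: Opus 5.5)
There is a genuine gap. You never use the structural fact that drives the lemma, and the case analysis you put in its place is not carried out and partly aims at the wrong target.

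Every vertex of $V\setminus\{n_s\}$ already has degree three inside $\cG(\rho)$. Hence $\{s,n_s\}$ is a cut edge of $U$, and $S=V\setminus\{n_s\}$ consists of non-leaf degree-3 vertices attached to the rest of $N$ by a single edge. Let $S^+$ be $S$, together with $\rho$ if $\rho$ subdivides an edge of $U[S]$. Since $N$ is acyclic, $N[S^+]$ has a source and a sink.
\begin{itemize}
\item If no arc enters $S^+$, its source is a source of $N$, so it must be $\rho$.
\item If no arc leaves $S^+$, its sink is a sink of $N$ that is neither a leaf nor $\rho$. It therefore has in-degree three, which a binary rooted network forbids.
\end{itemize}
Now go through the placements of $\rho$.
\begin{itemize}
\item If $\rho$ subdivides an edge outside $U[V]$: the arc $(s,n_s)$ means nothing enters $S^+$ and it contains no root; the arc $(n_s,s)$ means nothing leaves $S^+$. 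Both are impossible.
\item If $\rho$ subdivides $\{s,n_s\}$, then $(\rho,s)$ enters $S$ and nothing leaves it.
\item If $\rho$ subdivides an edge of $U[S]$, then $(n_s,s)$ would again leave nothing leaving $S^+$.
\end{itemize}
This proves (i) and (ii) using only acyclicity and the degree constraints of a binary network. Non-transitivity and Lemmas~\ref{lem:ladder} and~\ref{lem:fork} play no role, and this is essentially the paper's argument.

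Measured against this, your part (i) ends in hedges (``should close an underlying cycle'', ``may leave some gadget vertex with in-degree or out-degree zero''). These concern a gadget whose structure you never pin down. Also, Lemma~\ref{lem:fork} cannot be applied ``verbatim'' to a modified gadget. Your symmetry worry (reversal swaps root and leaves) disappears once the source and sink cases are handled separately.

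Your part (ii) has two concrete problems.
\begin{itemize}
\item You propose to show that placing $\rho$ on a ladder edge forces a directed cycle or transitive arc. The lemma claims nothing of the kind: it allows $\rho$ on any edge of $U[V]$, not only on $e_\rho$. You give no reason for this claim, and it may well be false. What must be shown for every placement of $\rho$ inside the gadget is only that $(n_s,s)$ is impossible, and that is exactly the sink argument.
\item Your bullet for ``an edge at $s$'' tacitly assumes the subdivided edge is not $\{s,n_s\}$ itself. That case must be excluded separately, since then $(s,n_s)$ is not an arc of $N$ at all. It is again ruled out by the sink argument.
\end{itemize}
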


\begin{proof}
Assume towards a contradiction that $N$ is not obtained from $U$ by subdividing an edge that is not in $U[V]$. Since $\{s,n_s\}$ is a cut edge of $U$, $(n_s, s)$ is an arc in $N$. Then, as $N$ is acyclic, the subgraph of $N$ obtained from directing the edges in $U[V]$ contains a vertex with in-degree three. Since this vertex is not a leaf of $N$, we get a contradiction to $N$ being a rooted phylogenetic network.
This establishes (i). Next, assume that $(s,n_s)$ is not an arc in~$N$. Then $(n_s, s)$ is an arc in~$N$ or $\{s,n_s\}$ is subdivided with the root vertex $\rho$, in which case both arcs incident with $\rho$ are directed away from $\rho$. It now follows that we obtain the same contradiction as for (i). Hence, (ii) holds as well.
\end{proof}

\begin{theorem}
\PhyloNTO\ is {\rm NP}-complete.
\end{theorem}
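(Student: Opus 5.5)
Membership in NP is immediate. A certificate is the edge of $U$ to subdivide together with a direction for each edge of the resulting graph. In polynomial time we check that the result is a rooted binary phylogenetic network on $L$: a unique root of in-degree zero and out-degree two, leaves of in-degree one and out-degree zero, and every other vertex of in-degree one and out-degree two or vice versa. We check acyclicity by depth-first search, and check each arc for transitivity by a reachability test in the digraph with that arc deleted, exactly as in the proof of Theorem~\ref{thm:main-nt-st-orientation}.

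For NP-hardness I would reduce from the restricted version in Corollary~\ref{cor:st-result-for-phylo}. There, $s$ and $t$ have degree one and every other vertex has degree three. Let $n_s$ and $n_t$ denote the unique neighbours of $s$ and $t$. Build $U$ as follows: keep $t$ as the single leaf $\ell$, so $L=\{t\}$. Replace the degree-1 vertex $s$ by the root gadget $\cG(\rho)$, identifying the vertex $s$ of $\cG(\rho)$ with $s$, so that $\{s,n_s\}$ is the cut edge joining the gadget to the rest. I am assuming, from Figure~\ref{fig:phylo-gadgets} and the statement of Lemma~\ref{lem:root-gadget}, that $\cG(\rho)$ is a fork-like graph whose only attachment is this cut edge and in which all vertices become degree three once it is attached. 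Then $U$ is an unrooted binary phylogenetic network on $\{t\}$, and the construction is polynomial.

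For the forward direction, take a non-transitive $st$-orientation $\cD(G)$. Orient the root gadget as in the right-hand side of Figure~\ref{fig:phylo-gadgets}, with $\rho$ subdividing the appropriate edge inside the gadget and the only arc leaving the gadget being $(s,n_s)$. Orient the rest as in $\cD(G)$. Acyclicity holds because no arc enters the gadget. Non-transitivity holds because any directed path between two vertices stays entirely in the gadget or entirely outside it, except for the single arc $(s,n_s)$, which cannot be transitive since it is a cut arc. The degree conditions hold because internal vertices of $G$ already have in- and out-degree at least one, hence are binary tree or reticulation vertices; $t$ is a leaf; and the gadget orientation is valid by the figure.

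For the backward direction, let $N$ be a non-transitive orientation of $U$. By Lemma~\ref{lem:root-gadget}, the root lies inside $\cG(\rho)$ and $(s,n_s)$ is an arc of $N$. Restrict $N$ to $G-\{s\}$ and add back the arc $(s,n_s)$. Then $s$ is the unique source, since every other vertex outside the gadget has the same in-degree as in $N$ and hence at least one. The only sink is the leaf $t$. Acyclicity and non-transitivity are inherited from $N$, because every directed path of this restriction is a directed path of $N$. This yields a non-transitive $st$-orientation of $G$.

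The main obstacle is the root gadget. Lemma~\ref{lem:root-gadget} is already given, so the remaining work is the forward direction: I must check carefully that the displayed orientation of $\cG(\rho)$ really is a valid rooted binary phylogenetic orientation fragment (root of out-degree two, every other vertex binary, no transitive arcs, no cycles) with $(s,n_s)$ as its unique exit. I must also confirm that $U$ is simple and that it has exactly the leaf $t$.
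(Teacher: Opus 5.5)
Your proposal is correct and follows essentially the same route as the paper. The paper likewise reduces from the variant in Corollary~\ref{cor:st-result-for-phylo} and attaches $\cG(\rho)$ to $G$ along the cut edge $\{s,n_s\}$, phrased there as deleting $n_s$ from $\cG(\rho)$ and identifying the two copies of $s$; its forward direction uses the fixed orientation of Figure~\ref{fig:phylo-gadgets}, its backward direction restricts $N$ to $G$ via Lemma~\ref{lem:root-gadget}(i) and (ii), and NP membership is argued identically.
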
 
\begin{proof}
We start by showing that \PhyloNTO\ is in NP. Let $N$ be a directed graph obtained from an unrooted phylogenetic network $U$ on $L$ by subdividing an edge with $\rho$ and then assigning a direction to each edge. By considering, in turn, each vertex of $N$, we can verify in polynomial time that $\rho$ is the unique source and the vertices in $L$ are precisely the sinks of $N$. Checking if $N$ is acyclic and non-transitive can be done in polynomial time using the same arguments as in the proof of Theorem~\ref{thm:main-nt-st-orientation}. Hence, verifying that $N$ is a rooted phylogenetic network on $L$ and a non-transitive orientation of $U$ can be done in polynomial time.

We show NP-hardness of \PhyloNTO\ by a reduction from \NTstO\ for graphs in which the source $s$ and sink $t$ have degree one and all other vertices have degree three which is NP-complete by Corollary~\ref{cor:st-result-for-phylo}. Let $G=(V,E)$ with $s,t\in V$ be such a graph. Furthermore, let $n_s$ denote the unique neighbor of $s$ in $G$, and let $\cG(\rho)$ be the root gadget with vertex set $V_\rho$.
We obtain an undirected graph $U$ from $G$ and $\cG(\rho)$ by first deleting $n_s$ in $\cG(\rho)$ and then identifying $s$ in the resulting graph with $s$ in $G$.
Clearly, no loop or parallel edge is introduced by joining $G$ and $\cG(\rho)$. Thus, $U$ is a simple connected graph. Let $V_U=(V\cup V_\rho)\setminus\{n_s\}$ be the vertex set of $U$. By construction, each vertex in $V_U\setminus \{t\}$ has degree three and $t$ has degree one in $U$. In particular, note that $s$ has two neighbors in $V_\rho\setminus\{n_s\}$ and one neighbor in $V$.
We conclude that $U$ is an unrooted phylogenetic network on $L=\{t\}$.

We complete the proof by showing that  $U$ is a yes-instance of \PhyloNTO\  if and only if $G$ is a yes-instance of \NTstO.  First, suppose that $U$ is a yes-instance of \PhyloNTO. Let $N$ be a rooted phylogenetic network on $L$ that is a non-transitive orientation of $U$. We obtain an $st$-orientation $\cD_\tau(G)$ as follows. For each edge $\{u,v\}$ of $G$, we set $\tau(\{u,v\}) = (u,v)$ if $(u,v)$ is an arc in $N$ and, otherwise, we set $\tau(\{u,v\}) = (v,u)$. Recall that the vertex set of $\cD_\tau(G)$ is equal to the vertex set $V$ of $G$. By Lemma~\ref{lem:root-gadget}(i), $N$ is obtained from $U$ by subdividing an edge of $U[V_\rho\setminus\{n_s\}]$, and so $\cD_\tau(G)$ is well defined. 
It  follows that the in-degree (resp.\ out-degree) of every vertex in $V$ except for $s$ coincides with its in-degree (resp.\ out-degree) in $N$. Hence, $t$ is a sink of $\cD_\tau(G)$ and no vertex in $V\setminus\{s,t\}$ is a source or a sink $\cD_\tau(G)$. Furthermore, by Lemma~\ref{lem:root-gadget}(ii), $(s, n_s)$ is an arc in $N$ and, thus, $s$ is a source of $\cD_\tau(G)$. Hence, $s$ is the unique source and $t$ is the unique sink of $\cD_\tau(G)$. Lastly, as each arc in $\cD_\tau(G)$ is also an arc in $N$, a directed cycle or a transitive arc in $\cD_\tau(G)$ implies that $N$  contains a directed cycle or a transitive arc. Hence, $\cD_\tau(G)$ is a non-transitive $st$-orientation. It now follows that $G$ is a yes-instance of \NTstO.  Second, suppose that $G$ is a yes-instance of \NTstO. Let $\cD_\tau(G)$ be a non-transitive $st$-orientation of $G$. We obtain a rooted phylogenetic network $N$ on $L=\{t\}$ that is an orientation of $U$ by subdividing the edge  $e_\rho$ of the root gadget as illustrated in Figure~\ref{fig:phylo-gadgets} with the vertex $\rho$, directing the edges of the root gadget as shown in the same figure, and directing each other edge $\{u,v\}$ of $U$ as $\tau(\{u,v\})$. As $s$ is the source of $\cD_\tau(G)$, observe that the cut edge $(s,n_s)$ is an arc in $\cD_\tau(G)$. Hence, each edge of $U$ that is also an edge of $G$ obtains the same direction in $N$ as it has in $\cD_\tau(G)$. 
By construction, the root vertex $\rho$ of $N$ has in-degree zero and out-degree two, the unique vertex with in-degree one and out-degree zero is $t$, and each remaining vertex in $N$ has  in-degree at least one and out-degree at least one. 
Furthermore, since $\{s,n_s\}$ is a cut edge of $U$ and neither $\cD_\tau(G)$ nor the orientation of the root gadget as shown in Figure~\ref{fig:phylo-gadgets}  contain a directed cycle, it follows that $N$ is acyclic. Thus, $N$ is a rooted phylogenetic network. Assume towards a contradiction that $N$ has a transitive arc $(u,v)$. The orientation chosen for the root gadget does not contain a transitive arc. In particular, the arc $(s, n_s)$ is not a transitive arc. Hence, both $(u,v)$ and each arc on the directed path from $u$ to $v$ in $N$ that does not traverse $(u,v)$ is also an arc in $\cD_\tau(G)$. Then $\cD_\tau(G)$ contains a transitive arc; a contradiction. We conclude that $N$ is a rooted phylogenetic network that is a non-transitive orientation of $U$ and, thus, $U$ is a yes-instance of \PhyloNTO. 

Since the root gadget has constant size,  the construction of $U$ from $G$ and $\cG(\rho)$ can be done in polynomial time, thereby completing the proof of the theorem. 
\end{proof}

\section{Conclusion}\label{sec:conclusion}
In this paper, we have shown that deciding the existence of a non-transitive $st$-orientation is NP-complete on cubic graphs, thereby settling an open question by Binucci et al.~\cite{binucci23} regarding this problem for graphs with vertex degree at most three. Since  \NTstO\ was previously known to be trivial for graphs whose vertices have degree at most two and NP-complete for graphs whose vertices have degree at most four~\cite{binucci23}, our result closes the remaining gap.
We have also shown that the  problem remains NP-complete  if a source and sink can be chosen freely as opposed to being designated vertices. Building on these results, we have then established NP-completeness for deciding whether an unrooted binary phylogenetic network has a non-transitive orientation as a rooted binary phylogenetic network, a problem that arises in the study of phylogenetic networks.
 
We conclude this paper with four questions for future research:
\begin{enumerate}[(Q$1$)]
\item Is \NTstO\ NP-hard on $k$-regular graphs with $k > 3$?
\item Does \NTstO\ remain NP-hard on cubic graphs with no cycles of length four? Observe that the fork gadget, as introduced in Section~\ref{sec:new-fork-gadget} strongly builds on those cycles.
\item Let $G$ be an undirected graph with vertices $s$ and $t$, and let $q\geq 2$ be a fixed integer. Is it NP-hard to decide whether an $st$-orientation $\cD(G)$ exists such that reversing the direction of any $q$ arcs in $\cD(G)$ does not result in a directed cycle? Note that for $q = 1$ this question is equivalent to \NTstO.
\item Is there a fork gadget that satisfies the topological constraints corresponding to other classes of binary phylogenetic network such as normal networks~\cite{willson10}? More generally, it would be interesting to explore other links between orientations of phylogenetic networks and orientations of  undirected graphs.
\end{enumerate}

\bibliographystyle{alpha}

\begin{thebibliography}{99}

\bibitem{biedl98} 
Biedl, T., Kant, G. (1998). A better heuristic for orthogonal graph drawings. Computational Geometry, 9(3):159--180.

\bibitem{binucci22}
Binucci, C., Didimo, W., Patrignani, M. (2023). $st$-orientations with few transitive edges.  Journal of Graph Algorithms and Applications, 27:625--650.

\bibitem{binucci23}
Binucci, C., Liotta, G., Montecchiani, F., Ortali, G., Piselli, T. (2025). On the parameterized complexity of computing $st$-orientations with few transitive edges. Journal of Graph Algorithms and Applications, 29:647--266.

\bibitem{brightwell93}
Brightwell, G. (1993). On the complexity of diagram testing. Order, 10:297--303.

\bibitem{bulteau23}
Bulteau, L., Weller, M., Zhang, L. (2023). On turning a graph into a phylogenetic network. HAL Open Science hal-04085424.

\bibitem{darmann20}
Darmann, A., Döcker, J. (2020). On a simple hard variant of Not-All-Equal 3-Sat. Theoretical Computer Science, 815:147--152.

\bibitem{dempsey24}
Dempsey, J., van Iersel, L., Jones, M., Murakami, Y., Zeh, N. (2024). A wild sheep chase through an orchard, arXiv:2408.10769.

\bibitem{doecker25}
Döcker, J., Linz, S. (2025). On the existence of funneled orientations for classes of rooted phylogenetic networks. Theoretical Computer Science 1023:114908. 

\bibitem{francis25}
Francis, A. (2025). ``Normal'' phylogenetic networks may be emerging as the leading class. Journal of Theoretical Biology, 614:112236.

\bibitem{garvardt23}
Garvardt, J., Renken, M., Schestag, J., Weller, M. (2023). Finding degree-constrained acyclic orientations. In: 18th International Symposium on Parameterized and Exact Computation, pp. 19:1--19:14.

\bibitem{huber22}
Huber, K. T., van Iersel, L., Janssen, R., Jones, M., Moulton, V., Murakami, Y., Semple, C. (2024). Orienting undirected phylogenetic networks. Journal of Computer and System Sciences 140:103480.

\bibitem{iersel}
van Iersel, L., Jones, M., Linz, S., Zeh, N.. A class of unrooted phylogenetic networks inspired by the properties of rooted tree-child networks, submitted.

\bibitem{janssen18}
Janssen, R., Jones, M. and Erd{\H{o}}s, P. L., van Iersel, L., Scornavacca, C. (2018). Exploring the tiers of rooted phylogenetic network space using tail moves. Bulletin of Mathematical Biology, 80:2177--2208.

\bibitem{kong22}
Kong, S., Pons, J. C., Kubatko, L., Wicke, K. (2022). Classes of explicit phylogenetic networks and their biological and mathematical significance. Journal of Mathematical Biology, 84:47.

\bibitem{maeda23}
Maeda, S., Kaneko, Y., Muramatsu, H., Murakami, Y., Hayamizu, M. (2023). Orienting undirected phylogenetic networks to tree-child networks, arXiv:2305.10162.

\bibitem{nestril95}
Ne\v{s}et\v{r}i, J., R\"odel, V. (1995). More on the complexity of cover graphs. Commentationes Mathematicae Universitatis Carolinae, 36:269--278.

\bibitem{ore62}
Ore, O. (1962) Theory of Graphs, volume 38 of American Mathematical Society Colloquium Publications. American Mathematical Society.

\bibitem{rodel95}
 R\"odel, V., Thoma, L. (1995). The complexity of cover graph recognition for some varieties of finite
lattices. Order, 12:351--374.

\bibitem{schaefer78}
Schaefer, T. J. (1978). The complexity of satisfiability problems. In: Proceedings of the Tenth annual ACM Symposium on Theory of Computing, pp. 216--226.

\bibitem{urata24}
Urata, T., Yokoyama, M., Hayamizu, M. (2024). Orientability of undirected phylogenetic networks to a desired class: Practical algorithms and application to tree-child orientation. In: 24th International Workshop on Algorithms in Bioinformatics, WABI 2024, LIPIcs 312, pp. 9:1--9:17.

\bibitem{weiskircher01} Weiskircher, R. (2001). Drawing planar graphs. In: Drawing Graphs: Methods and Models, pp. 23--45. 

\bibitem{willson10}
Willson, S. J. (2010). Properties of normal phylogenetic networks. Bulletin of
Mathematical Biology, 72:340--358.

\end{thebibliography}

\end{document}